\documentclass[12pt]{iopart}
\usepackage{iopams}
\usepackage{graphicx}
\usepackage{float}
\usepackage{mdframed}
\expandafter\let\csname equation*\endcsname\relax
\expandafter\let\csname endequation*\endcsname\relax
\usepackage{amsmath}
\usepackage{mathtools} 
\usepackage{braket} 
\usepackage{stackrel} 
\usepackage{bigints} 
\usepackage{bbm} 
\usepackage{amsthm} 
\usepackage{pifont}
\newtheorem{remark}{Remark}
\newtheorem{theorem}{Theorem}[section]

\newtheorem{lemma}[theorem]{Lemma}
\theoremstyle{definition}
\newtheorem{definition}{Definition}[section]

\usepackage[usernames,dvipsnames,svgnames,table]{xcolor}
\usepackage{hyperref}
\definecolor{myblue}{rgb}{0.0, 0.3, 1}
\newcommand{\mini}{\scriptscriptstyle}

\DeclareMathOperator{\Imag}{Im}

\begin{document}

	\title[Hidden freedom in the mode expansion on static spacetimes]{Hidden freedom in the mode expansion \\on static spacetimes}

	\author{Lissa de Souza Campos$^1$, Claudio Dappiaggi$^2$, Luca Sinibaldi$^3$}
	\address{Dipartimento di Fisica, Universit\`a degli Studi di Pavia \& \\
		Istituto Nazionale di Fisica Nucleare, Sezione di Pavia, Via Bassi, 6, Pavia, Italy}
	\ead{$^1$lissa.desouzacampos@unipv.it\\\hspace{1.3cm}$^2$claudio.dappiaggi@unipv.it\\\hspace{1.3cm}$^3$luca.sinibaldi01@universitadipavia.it}
	\vspace{10pt}
	\begin{indented}
		\item[]July 2022
	\end{indented}

	\begin{abstract}
We review the construction of ground states focusing on a real scalar field whose dynamics is ruled by the Klein-Gordon equation on a large class of static spacetimes. As in the analysis of the classical equations of motion, when enough isometries are present, via a mode expansion the construction of two-point correlation functions boils down to solving a second order, ordinary differential equation on an interval of the real line. Using the language of Sturm-Liouville theory, most compelling is the scenario when one endpoint of such interval is classified as a limit circle, as it often happens when one is working on globally hyperbolic spacetimes with a timelike boundary. In this case, beyond initial data, one needs to specify a boundary condition both to have a well-defined classical dynamics and to select a corresponding ground state. Here, we take into account boundary conditions of Robin type by using well-known results from Sturm-Liouville theory, but we go beyond the existing literature by exploring an unnoticed freedom that emerges from the intrinsic arbitrariness of secondary solutions at a limit circle endpoint. Accordingly, we show that infinitely many one-parameter families of sensible dynamics are admissible. In other words, we emphasize that physical constraints guaranteeing the construction of full-fledged ground states do not, in general, fix one such state unambiguously. In addition, we provide, in full detail, an example on $(1 + 1)$-half Minkowski spacetime to spell out the rationale in a specific scenario where analytic formulae can be obtained.


	\end{abstract}
	\maketitle

	\section{Introduction}

	Quantum field theory on curved spacetimes has lead to significant improvements in our understanding of different physical phenomena ranging from particle production in cosmology to Hawking radiation in black hole physics. In the analysis of the vast majority of the available models, the first step consists in constructing full-fledged quantum states for free fields. Under the mild assumption that the correlation functions are Gaussian, it reduces to the identification of an on-shell two-point correlation function that has to abide to physically motivated constraints. The prime example in this direction is the Hadamard condition, which ensures not only that the quantum fluctuations of all observables are finite, but also that Wick ordered fields can be constructed following a locally covariant scheme. In turn, it entails control of the underlying renormalization group and of interactions that are studied at a perturbative level. Yet, in many concrete scenarios one is limited to abstractly argue the existence of such distinguished two-point functions and an explicit construction is, at best, elusive.

Major improvements occur when one concentrates on static spacetimes $\mathcal{M} \simeq \mathbb{R}\times \Sigma$, regardless of whether they are globally hyperbolic or not. Let us consider, for simplicity, a free, scalar field $\Psi:\mathcal{M}\rightarrow \mathbb{R}$ that abides to the Klein-Gordon equation. By calling $t$ the time coordinate along $\mathbb{R}$, the latter simplifies to
	\begin{equation}\label{eq: KG evolution form}
		\left(\frac{\partial^2}{\partial t^2} + K \right)\Psi = 0,
	\end{equation}
	where $K$ is an elliptic, second order partial differential operator. The key rationale consists of reading $K$ as a symmetric operator on the Hilbert space $\mathcal{H}:= L^2(\Sigma,d\mu_\Sigma)$ of square-integrable functions with respect to the measure $d\mu_\Sigma$ induced by the Lorentzian metric tensor of $\mathcal{M}$ on $\Sigma$. This leads to two notable advantages, one at a classical and one at a quantum level, as described next.

At a classical level, solutions of \Eref{eq: KG evolution form} can be constructed as follows. Since $K$ is a real and symmetric operator, it admits a non-necessarily unique self-adjoint extension $\widetilde{K}$. Assuming, for convenience, that $\widetilde{K}$ has positive spectrum, and given initial data $(\Psi_0,\dot\Psi_0)\in C_0^\infty(\Sigma)\times C_0^\infty(\Sigma)\cap D(\widetilde{K})\times D(\widetilde{K})$, for each $t\in\mathbb{R}$ we have
	\begin{equation}
		\label{eq: sol Psi_t dynamics A_E}
		\Psi_t:=\cos(\sqrt{\widetilde{K}}t)\Psi_0+ \frac{\sin(\sqrt{\widetilde{K}}t)}{\sqrt{\widetilde{K}}}\dot\Psi_0\in\mathcal{H},
	\end{equation}
where each term is well-defined using spectral calculus. Moreover, there exists a unique $\Psi\in C^\infty(\mathcal{M})$ such that,
	\begin{equation*}
		\Psi|_{\mini{\Sigma_t}} = \Psi_t   \quad \text{ and }\quad  \nabla_n\Psi|_{\mini{\Sigma_t}} = \dot\Psi_t,
	\end{equation*}
where $\Sigma_t\equiv\{t\}\times\Sigma$, $t\in\mathbb{R}$, while $n$ is the unit vector field normal to $\Sigma_t$.

If $(\mathcal{M},g)$ is a globally hyperbolic spacetime without boundary, than there exists a unique choice for $\widetilde{K}$ and the dynamics is therefore unambiguously determined \cite{bar2007wave}. On the contrary, if $K$ has more than one self-adjoint extension, then multiple, physically inequivalent scenarios do exist. Markedly, the latter is not a remote possibility: it occurs for example when $(\mathcal{M},g)$ is a globally hyperbolic spacetime with a timelike boundary, see \cite{Ake2018dzz}. This class of backgrounds encompasses several physically interesting scenarios such as AdS spacetime, which was first analyzed in the language above by Ishibashi and Wald in \cite{Ishibashi2003jd}. It must be stressed that, using the language of boundary triples \cite{grubb1968characterization}, the infinite set of different choices of self-adjoint extensions for $\widetilde{K}$ can be put in correspondence with the choice of a boundary condition for \Eref{eq: KG evolution form}.

At a quantum level, the assumption that $(\mathcal{M},g)$ is static guarantees a considerable advantage: the existence of a {\em ground state}. Under the same premises of the previous paragraphs, the associated two-point function $\psi_2(t,x,t^\prime, x^\prime)$ can be constructed directly as the integral kernel of the operator \cite{fulling1989aspects}
\begin{equation}\label{eq: ground state static}
\frac{e^{-i\sqrt{\widetilde{K}}(t-t^\prime)}}{2\sqrt{\widetilde{K}}}.
\end{equation}
Observe that there exists a different ground state for each self-adjoint extension $\widetilde{K}$, in accordance with the fact that each $\widetilde{K}$ characterizes a different physical system. Furthermore, if the underlying background is a globally hyperbolic spacetime with or without timelike boundary, then ground states are of Hadamard form as a consequence of the results of \cite{Sahlmann:2000fh,Dappiaggi2021wtr}.

While at this stage the analysis of a scalar quantum field theory on a static spacetime seems a rather well-understood problem, the drawback lies in two crucial details. On the one hand, when non unique, an explicit construction and characterization of all self-adjoint extensions of $K$ in Equation \eqref{eq: KG evolution form} is a daunting task. On the other hand, the quantitative evaluation of physical observables in concrete scenarios, such as on black hole spacetimes, requires a deeper and more hands on knowledge of the two-point function, far beyond the spectral level as per Equation \eqref{eq: ground state static}. To bypass this conceptual hurdle, it is customary to consider static backgrounds with a high degree of symmetry. Beyond reasons of mathematical simplicity, this class includes many physically relevant backgrounds, such as cosmic strings, black holes and asymptotically AdS spacetimes.

In this paper, we consider the class of $n$-dimensional static spacetimes $\mathcal{M}$ that are isometric either to
\begin{equation*}
\mathbb{R}\times I, \,\text{  if  }\, n=2 \quad \text{  or  }\quad \mathbb{R}\times I\times\Sigma_j^{n-2} , \,\text{  if  }\, n>2,
\end{equation*}
where $I\subseteq\mathbb{R}$, and $\Sigma_j^{n-2}$ are Cauchy-complete, connected, $(n-2)$-dimensional Riemannian manifolds of constant sectional curvature $j$. The line element associated to the metric tensor on $\mathcal{M}$ reads
 \begin{equation*}
	ds^2=-f(r)dt^2+h(r)dr^2+r^2d\Sigma_{j}^{n-2}(\varphi_1,\dots,\varphi_{n-2}),
\end{equation*}
where $f$ and $h$ are suitable positive functions. Barring some technical aspects that will be specified in the next sections, we emphasize that a large class of spacetimes is characterized by the line-element above, including black hole backgrounds ranging from the three-dimensional static BTZ spacetime to the $n$-dimensional Schwarzschild or Schwarzschild-AdS spacetime.

On top of these manifolds, we consider a real, scalar field $\Psi$ whose dynamics is ruled by the Klein-Gordon equation, which, as before, can be written as per \Eref{eq: KG evolution form}. With the construction of a quantum field theoretical framework in mind, we are interested in obtaining distinguished two-point functions that correspond to full-fledged ground states. Although the procedure outlined above is applicable, especially when considering scenarios where boundary conditions needs to be imposed, it is common to follow a more computationally oriented approach that exploits the underlying symmetries. In the following, we sketch the steps usually followed in the literature in these scenarios. More details will be given in the next sections of this work.\label{pg:sketch items 1}
\begin{itemize}
	\item[\ding{192}] Consider a solution of the Klein-Gordon equation on $(\mathcal{M},g)$ assuming that it admits a mode expansion:
	\begin{equation*}
		\Psi_{\omega \eta_j}(t,r,\varphi_1,...,\varphi_{n-2})= e^{-i\omega t}R_{\omega \eta_j}(r)Y_j(\varphi_1,...,\varphi_{n-2}),
	\end{equation*}
	where $Y_j(\varphi_1,...,\varphi_{n-2})$ are the eigenfunctions of the Laplace operator on $\Sigma_{j}^{n-2}$ whose corresponding eigenvalue is denoted by $\eta_j$, while $\omega\in\mathbb{R}$ plays the standard r\^{o}le of frequency.
	\item[\ding{193}] The only unknown function $R_{\omega \eta_j}$ can be shown to satisfy an eigenvalue equation $\mathbf{A}R_{\omega \eta_j}=\lambda R_{\omega \eta_j}$ where $\mathbf{A}$ is a second order differential operator in the radial coordinate $r$ whose domain is the interval $I$, here taken for definiteness as $(a,b)$. Most notably $\mathbf{A}$ can be written in the form of a, possibly singular, Sturm-Liouville operator, see \cite{zettl}.
	\item[\ding{194}] Similarly to the r\^{o}le played by $K$ in Equation \eqref{eq: KG evolution form}, one reads $\mathbf{A}$ as a symmetric operator on a space of square-integrable functions over the interval $I=(a,b)$.
	\item[\ding{195}] Following von Neumann's theory of deficiency indices \cite{Moretti}, three options are possible. $\mathbf{A}$ can admit just one self-adjoint extension, a one-parameter or a two-parameters family of self-adjoint extensions. In most applications, the last option does not occur.
\end{itemize}

At this stage it is necessary to pause the description of the procedure to construct a two-point function and draw the attention to self-adjoint extensions. More precisely, from the viewpoint of the differential equation $\mathbf{A}R_{\omega \eta_j}=\lambda R_{\omega \eta_j}$, the existence of such extensions can be inferred by looking at the behavior of solutions close to $a$ and $b$, the endpoints of the interval $I$. Henceforth, for definiteness, we focus on $a$. The general theory of Sturm-Liouville operators guarantees that, for any $\lambda\in\mathbb{C}$, there is always a distinguished function $u$, called {\em principal solution}. In Section \ref{sec: Self-adjoint extensions} we dwell on the technical details of this concept. For now it suffices to say that $u$ tends to zero as $r\to a^+$ faster than any other solution that is linearly independent from it, and it is square-integrable in any neighborhood of the endpoint $a$. Notwithstanding, the existence of another solution, called {\em secondary solution}, that is linearly independent from $u$ and square-integrable in any neighborhood of $a$ depends on the differential problem at hand. If such a solution does not exist at both endpoints, then it happens that there is a unique self-adjoint extension for $\mathbf{A}$, see \cite{zettl}. More interesting is the scenario for which that is not the case for some value of $\lambda\in\mathbb{C}$, on account of the fact that if a secondary solution exists, then it is highly non unique.

The intrinsic arbitrariness of the secondary solution lies at the core of this work. Suppose there exists a secondary solution at $a$, but only the principal one at $b$. From the viewpoint of the Sturm-Liouville operator, the choice of a specific secondary solution at $a$ is irrelevant as its r\^{o}le lies only in establishing a one-to-one correspondence between self-adjoint extensions of $\mathbf{A}$ and boundary conditions of Robin type, assigned at the endpoint $a$. These exhaust all possibilities at the level of the ordinary differential equation and therefore one can read the choice of two different secondary solutions as two different, albeit equivalent, ways to span the same space of boundary conditions and, consequently, of solutions of the underlying ordinary differential equation. In most of the recent literature in quantum field theory on curved spacetimes, a consequential dogma has always been to consider the secondary solution as an innocuous and physically insignificant abstraction. Our main goal is to argue that this by far not the case since, tracking back the problem at a fully covariant level, making different choices of the secondary solution, while keeping Robin boundary conditions, impacts significantly the analysis of scalar field $\Psi$. More precisely one is able to codify at a covariant level a much larger class of admissible boundary conditions than just a one-parameter family as the standard analysis might suggest.

The physical relevance of choosing a secondary solution is limpid when one constructs the two-point correlation function of the underlying ground state. Let us thus focus once more on the procedure we started sketching at Page \pageref{pg:sketch items 1} and let us state the subsequent steps, as follows.
\begin{itemize}
	\item[\ding{196}] Since the two-point correlation function $\psi_2$ must obey the Klein-Gordon equation it is suitable to use the same mode expansion as for the construction of the solutions $\Psi$. This, together with the ansatz that only positive frequencies contribution are of relevance, identifies a full-fledged ground state. On account of the large isometry group of the background, $\psi_2$ is completely determined up to a kernel along the radial direction.
	\item[\ding{197}] As explained in Section \ref{sec: The quantum dynamics}, such kernel can be constructed using an algorithmic scheme once a self-adjoint extension for the operator $\mathbf{A}$ has been chosen in terms of a Robin boundary conditions imposed at the level of principal and secondary solutions.
\end{itemize}

We emphasize that the procedure \ding{192}--\ding{197} has been extensively applied on static spacetimes with a timelike boundary in the last years. To mention a few, exhaustive works based on Von Neumann deficiency index theory are \cite{Higuchi2021fxg,ishibashi2004dynamics}, on AdS spacetimes, and \cite{Garbarz2017wzv}, on a static BTZ black hole. Beyond these, analyses based on a mode expansion and on Robin boundary conditions and used in the construction of physically-sensible two-point functions within quantum field theory on asymptotically AdS spacetimes can be found in \cite{Barroso2019cwp,Barroso2018pjs,Dappiaggi2018xvw,Bussola2017wki,Dappiaggi2016fwc}, and, more recently, also in \cite{Dappiaggi2022dwo,Campos2021wyr,deSouzaCampos2020bnj,Morley2020zcd,Morley2020ayr}.

The scope of this work is to highlight the fundamental r\^{o}le played, in this whole procedure, by the secondary solutions of the underlying Sturm-Liouville problem. To strengthen this statement we consider a simple, yet illustrative example, namely the two-dimensional half-Minkowksi spacetime $\mathbb{R}\times\mathbb{R}_+$. The advantage is that, in this scenario, we can address the problem using explicit, analytic formulae that allow to make clear that even minor adjustments to the choice of secondary solution yield, at the fully covariant level, boundary conditions that have a completely different physical interpretation. In a nutshell, we aim to convey that the choice of secondary solution is of physical consequence.

	This paper is organized as follows. In Section \ref{sec: The Klein-Gordon equation} we show that the Klein-Gordon equation untangles into a Sturm-Liouville problem for the radial part of the Klein-Gordon operator on static spacetimes with maximally symmetric sections. Subsequently, in Section \ref{sec: Self-adjoint extensions} we provide straightforward generalizations of main results from singular Sturm-Liouville theory that allow us to obtain all self-adjoint representations of the latter. Markedly, singular endpoints give rise to an ambiguity in the definition itself of generalized Robin boundary conditions. We clear up this ambiguity in Section \ref{sec: generalized gamma v robin} by defining generalized $(\gamma, v)-$Robin boundary conditions and explaining its connection with the regular case. In Section \ref{sec: Sensible dynamics} we show how a boundary condition on the radial part translates to a boundary condition on the full solution $\Psi$. Insofar as solutions of the Klein-Gordon equation characterize a classical dynamics, it is meaningful to include a discussion on the canonical quantization procedure. Hence, in Section \ref{sec: The quantum dynamics} we explain the connection between the imposition of the canonical commutation relations and the spectral resolution of the identity given by the before-mentioned Sturm-Liouville problem.
	We illustrate the main points of this work in a detailed example given in Section \ref{sec: example harmonic oscillator}. Most importantly, this example clarifies in which sense the generalized $(\gamma, v)-$Robin boundary conditions imposed on the radial part may render time-dependent boundary conditions on $\Psi$. Final remarks are given in Section \ref{sec: conclusions}.

	\section{The Klein-Gordon equation
		\label{sec: The Klein-Gordon equation}
	}

In this initial section we introduce both the geometric data of the spacetimes we are interested in and the Klein-Gordon equation. In addition we show that, under the specific assumptions on the background metric, the Klein-Gordon equation can be reduced to a Sturm-Liouville problem.

In this work, for $n>2$, $\Sigma_{j}^{n-2}$ denotes a Cauchy-complete, connected, $(n-2)$-dimensional Riemannian manifold of constant sectional curvature $j$, parametrized by $(\varphi_1,...,\varphi_{n-2})$, and whose standard metric has an associated line element $d\Sigma_j^{n-2}(\varphi_1,\dots,\varphi_{n-2})$.  Unless stated otherwise, we shall assume $j$ has been normalized so that $j\in\{-1,0,+1\}$. The symbol $\mathcal{M}$ refers to an $n$-dimensional, static spacetime isometric to the warped geometry $\mathbb{R}\times \text{I} \times \Sigma_{j}^{n-2}$, where $I\subseteq\mathbb{R}$, while the line element of the underlying metric read in global Schwarzschild-like coordinates $(t,r,\varphi_1,...,\varphi_{n-2})$:
	\begin{equation}
		\label{eq:metric schwarzschild-like}
		ds^2=-f(r)dt^2+h(r)dr^2+r^2d\Sigma_{j}^{n-2}(\varphi_1,\dots,\varphi_{n-2}).
	\end{equation}
For convenience, we shall call $t\in\mathbb{R}$, $r\in\text{I}\subseteq\mathbb{R}$, and $(\varphi_1,...,\varphi_{n-2})$, respectively, the time, the radial and the angular coordinates. Equation \eqref{eq:metric schwarzschild-like} is completely specified aside from the two functions $f,h$. For simplicity, we assumed them to be elements of $C^\infty(I;(0,\infty))$, although in many instances throughout this work less regularity would suffice. Note that we also allow for the case $n=2$, in which $\mathcal{M}$ is isometric to $\mathbb{R}\times I$ endowed with the line element $ds^2=-f(r)dt^2+h(r)dr^2$.

\begin{remark}
	We consider $I$ to be an open interval, say $I=(a,b)$, which might suggest that we are discarding scenarios of notable interest such as globally hyperbolic manifolds with timelike boundary, \textit{e.g.} the universal cover of $AdS_n$. In these cases, the counterpart of $I$ would include one or both endpoints in the domain of the coordinate $r$. Yet, if one is interested in the analysis of boundary conditions and their effects, it suffices to focus the attention on the interior of the underlying manifold. Therefore, our analyses can be straightforwardly applied to such cases as well.
\end{remark}

On $\mathcal{M}$, we consider a free, scalar field with mass $m_0\geq 0$, $\Psi:\mathcal{M}\rightarrow\mathbb{R}$ whose dynamics is ruled by the Klein-Gordon equation:
	\begin{equation}
		\label{eq: KG}
		P\Psi : = (\Box - m_0^2 - \xi \mathbf{R})\Psi=0,
	\end{equation}
where $\xi\in\mathbb{R}$, while $\mathbf{R}$ is the scalar curvature built out of the metric as per Equation \eqref{eq:metric schwarzschild-like}. In the case in hand, the D'Alembert wave operator, denoted by $\Box$, reads
$$\Box=-\frac{1}{f(r)}\partial^2_t + \frac{1}{h(r)}\partial^2_r + \frac{\partial_r(fh r^{n-4})}{fh r^{n-2}}\partial_r+ \frac{\Delta_{\Sigma_j^{n-2}}}{r^2},$$
where $\Delta_{\Sigma_j^{n-2}}$ is the Laplace operator on $\Sigma_j^{n-2}$. In the following, we construct the solutions of the Klein-Gordon equation. Although \Eref{eq: KG} can be recast in the form of \Eref{eq: KG evolution form}, we take a route alined with the procedure \ding{192}--\ding{197} described in the Introduction. Assuming that the regularity of $\Psi$ is such that we can work at the level of modes, we consider the ansatz
\begin{equation}
		\label{eq: ansatz KG in mode decomposition}
		\Psi_{\omega \eta_j}(t,r,\varphi_1,...,\varphi_{n-2})= e^{-i\omega t}R_{\omega \eta_j}(r)Y_j(\varphi_1,...,\varphi_{n-2}),
	\end{equation}
	where $Y_j(\varphi_1,...,\varphi_{n-2})$ are the eigenfunctions of $\Delta_{\Sigma_j^{n-2}}$ with corresponding eigenvalues denoted by $\eta_j$. Observe that $\Delta_{\Sigma_j^{n-2}}$ has a continuous spectrum if $j\in\{-1,0\}$, while a discrete one if $j=1$.

Equation \eqref{eq: ansatz KG in mode decomposition} in combination with \Eref{eq: KG} yields that $R_{\omega \eta_j}$ obeys to a second-order, ordinary differential equation, dubbed {\em radial equation}:
	\begin{align}
		& \mathbf{A}  R_{\omega \eta_j}(r) = 0 \label{eq: radial equation AR}\\
		& \mathbf{A}:= \frac{d^2}{dr^2} +   \left[ \frac{1}{2} \left(   \frac{f'(r)}{f(r)} - \frac{h'(r)}{h(r)} \right) + \frac{n-2}{r}  \right]  \frac{d}{dr}+ \left[\left(\frac{\omega ^2}{f(r)}+\frac{\eta_j }{r^2} -m_0^2 - \xi \mathbf{R}\right) h(r) \right]  .\nonumber
	\end{align}
	We can rewrite \Eref{eq: radial equation AR} as
	\begin{align}
		(&L-\lambda) R_{\omega \eta_j}(r) = 0, \label{eq: Sturm-Liouville radial eq} \\
		&L := \mathbf{A}  + \lambda = \frac{1}{\mu(r)} \left(- \frac{d}{d r} \left(p(r)\frac{d}{d r}\right) + q(r)\right),\nonumber
	\end{align}
	where
	\begin{equation}
		\label{eq: lambda eigenvalues for general SL schd-like}
		\lambda:=\begin{cases}
			\omega^2, \text{ if }f(r) \neq 1, \\
			\omega^2 - m_0^2 - \xi \mathbf{R}, \text{ if }f(r) = 1 \text{ and }  \mathbf{R} \text{ is constant}, \\
			\omega^2 - m_0^2, \text{ otherwise}.
		\end{cases}
	\end{equation}
	If $\lambda = \omega^2$, the functions $p$, $q$ and $\mu$ are given by
	\numparts
	\begin{align}\label{Eq: coefficients}
		& p(r) := r^{n-2 }\sqrt{\frac{f(r)}{h(r)}} =: \frac{r^{2n-4}}{\mu(r)},\\
		& q(r) := r^{n-4}(\eta_j - r^2 (m_0^2 + \xi \mathbf{R}) )\sqrt{f(r)h(r)}.\label{Eq: coefficients_2}
	\end{align}
	\endnumparts
	In the remaining cases the corresponding expressions can be derived directly from Equation \eqref{eq: radial equation AR}, but we omit listing them as they are a rather straightforward modification of Equation \eqref{Eq: coefficients} and \eqref{Eq: coefficients_2}. To conclude this section we observe that Equation \eqref{eq: radial equation AR} identifies a, possibly singular, Sturm-Liouville problem, following the standard nomenclature of ordinary differential equations, see {\it e.g.} \cite{zettl}.

\section{Self-adjoint extensions
	\label{sec: Self-adjoint extensions}
}

Envisioning the construction of ground states for the Klein-Gordon field $\Psi$, it is essential to obtain the advanced and retarded fundamental solutions associated to the operator $P$ as in Equation \eqref{eq: KG}. To this end we bear in mind a procedure that has been considered in several examples in the literature \cite{Dappiaggi2018xvw,Bussola2017wki,Dappiaggi2016fwc,Dappiaggi2022dwo,Campos2021wyr,deSouzaCampos2020bnj}, mainly when the underlying spacetime possesses a conformal, timelike boundary. The starting point is $\mathbf{A}$, as per Equation \eqref{eq: Sturm-Liouville radial eq}, which shall be read as an operator on the Hilbert space $\mathcal{L}^2(I, \mu(r)dr)$.

The operator $\mathbf{A}$, and consequently also $L$, is manifestly symmetric when taken with the dense domain $C^\infty_0(I)$. Herein, we scrutinize whether $L$ admits self-adjoint extensions and, if so, how many of them. This is a mathematical question that can be answered by combining tools of Sturm-Liouville theory with the theory of unbounded operators on Hilbert spaces, see {\it e.g.} \cite{Moretti}. Accordingly, in the following we recall the main results known in the literature that are of relevance to our investigation as well as necessary to make this work self-contained. All definitions and lemmas introduced here culminate in Theorem \ref{thm: theorem self-adjoint extensions generalized robin}, which constitutes the resolution to the question in hand.

First, let us pose the question more precisely. Consider the Sturm-Liouville problem, as in \Eref{eq: Sturm-Liouville radial eq},
\begin{align}
	\label{eq: sl problem self-adjoint xt}
	L R(r):=\frac{1}{\mu(r)} \left(- \frac{d}{d r} \left(p(r)\frac{d}{d r}\right) + q(r)\right) R(r)&= \lambda R(r),
\end{align}
where as of now we omit the subscripts $\omega,\eta_j$ from the radial function for decluttering. In addition, letting $\mathcal{L}^1_{\mini\text{loc}}$ refer to locally integrable functions, we assume that
\begin{equation*}
	1/p,q,\mu\in \mathcal{L}^1_{\mini\text{loc}}(I), \quad \mu>0,\quad \text{and}\quad r\in I:=(a,b) \text{, }-\infty \leq a < b \leq \infty,
\end{equation*}

As alluded to in the previous paragraphs, $L$ as per Equation \eqref{eq: sl problem self-adjoint xt} identifies either a {\bf minimal} or a {\bf maximal} operator respectively indicated by $L_{\mini{\text{min}}}$ and $L_{\mini{\text{max}}}$, with corresponding domains
	\begin{align*}
		&D_{\mini{\text{min}}}(L_{\mini{\text{min}}}) := \overline{C^\infty_0(I)},\\
		&D_{\mini{\text{max}}}(L_{\mini{\text{max}}}) := \{y:I\rightarrow\mathbb{C}\;|\; py'\in AC_{\mini{\text{loc}}}(I),\;\textrm{and}\; y,\mu^{-1}Ly\in\mathcal{L}^2(I, \mu(r)dr)\},
	\end{align*}
where the closure is taken with respect to the graph topology, and $AC_{\mini{\text{loc}}}(I)$ denotes the set of functions that are absolutely continuous on all compact intervals of $I$. Specifically, our quest is to find self-adjoint extensions $L_{\mini{\text{S.A.}}}$ of $L_{\mini{\text{min}}}$,
\begin{equation*}
	L_{\mini{\text{min}}}\subset L_{\mini{\text{S.A.}}}=L_{\mini{\text{S.A.}}}^*       \subset L_{\mini{\text{max}}},
\end{equation*}
whose domain shall be denoted by $D_{\mini{\text{S.A.}}}(L_{\mini{\text{S.A.}}})$. If their spectrum satisfies  $\sigma(L_{\mini{\text{S.A.}}})\subseteq [0,\infty)$, then we say $L_{\mini{\text{S.A.}}}$ is a \textbf{positive}, self-adjoint extension of $L_{\mini{\text{min}}}$. \\

In order to address the quest stated above, we introduce some additional tools tailoring the analysis of \cite{zettl} to the case of interest, {\it i.e.} Equation \eqref{eq: sl problem self-adjoint xt} together with the assumptions of the previous sections.

For $y,z\in AC_{\mini{\text{loc}}}(I)$ we denote the {\bf Lagrange sesquilinear form} and the {\bf Wronskian}, respectively, by
\begin{align}
	&[y,z] := p \left(y \overline{z}' -   y' \overline{z  } \right) = -\overline{[z,y]},\\
	&\{y,z\} := p \,\, \left(y z' -   y' z \right) = [y,\overline{z}].
\end{align}

\begin{definition}
	\label{def: weyls}
	Let $e$ be an endpoint of the interval $I=(a,b)$, {\it i.e.} $e\in\{a,b\}$. Define $I_c:=(a,c)$ if $e=a$, while $I_c:=(c,b)$ if $e=b$. We say the endpoint $e$ is
	\begin{enumerate}
		\item[i)] \textit{regular} if $1/p,q,\mu\in\mathcal{L}^1(I_c)$ for some (and hence any) $c\in I$;
		\item[ii)] \textit{singular} if it is not regular;
		\item[iii)] \textit{limit circle} if all solutions of \Eref{eq: sl problem self-adjoint xt} lie in $\mathcal{L}^2(I_c, \mu(r)dr),\, \forall c\in I$;
		\item[iv)] \textit{limit point} if it is not limit circle.
	\end{enumerate}
\end{definition}

\noindent The following definition is especially relevant for our analysis since it differentiates among the solutions of Equation \eqref{eq: sl problem self-adjoint xt} depending on their behaviour close to an endpoint.

\begin{definition}
	\label{def: principal and secondary}
	Let $y$ be a non-vanishing solution of \Eref{eq: sl problem self-adjoint xt} in $I_c,\forall c\in I$. Then we say $y$ is a
	\begin{enumerate}
		\item[i)] \textit{principal solution at $e$} if, for any other solution $z$ of Equation \eqref{eq: sl problem self-adjoint xt} that is linearly-independent from $y$,
		\begin{equation*}
			\frac{y(x)}{z(x)} \xrightarrow{x\rightarrow e}0.
		\end{equation*}
		\item[ii)] \textit{secondary (or non-principal) solution at $e$} if it is not a principal solution.
	\end{enumerate}
\end{definition}
Definitions \ref{def: weyls} and \ref{def: principal and secondary} relate by the fact that at a limit point only the principal solution belongs to $\mathcal{L}^2(I_c, \mu(r)dr),\, \forall c\in I$. Manifestly, the classification given by Definition \ref{def: principal and secondary} is of relevance only when at least one of the endpoints is a limit circle. Thus, for the remainder of this section, we assume that on $I=(a,b)$, $a$ is a limit circle while $b$ is a limit point. In addition, we denote by $u$ and $v$, respectively, the principal and secondary solutions at the limit circle endpoint and we set $[u,v](c)=\Lambda\in\mathbb{C}$ for all $c\in I$. We observe that although $a$ and $b$ might be singular endpoints, for any $y,z\in   D_{\mini{\text{max}}}(L_{\mini{\text{max}}}) $, since the following limits exist, it makes sense to define
\begin{equation}
	[y,z](a) := \lim\limits_{r\rightarrow a^+} [y,z]<\infty \quad \text{ and }\quad  [y,z](b) := \lim\limits_{r\rightarrow b^-} [y,z]<\infty.
\end{equation}

Next, we report three results concerning the interplay between the Lagrange sesquilinear form and Equation \eqref{eq: sl problem self-adjoint xt}. Their proofs, omitted here, are a direct adaptation to the case in hand respectively of Lemmas 10.4.1, 10.4.2 and 10.4.6 in \cite{zettl}.

\begin{lemma}
	\label{lem: Naimark Patching lemma}
	Let $L$ be as per Equation \eqref{eq: sl problem self-adjoint xt} and let $c,d\in I$. Given $\alpha,\beta,\gamma,\delta\in\mathbb{C}$, there exists $y\in D_{\mini{\text{max}}}(L_{\mini{\text{max}}})$ such that
	\begin{equation*}
		\begin{matrix*}[l]
			y(c) = \alpha, &  y(d) = \gamma, \\
			(py')(c) = \beta, & (py')(d) = \delta.
		\end{matrix*}
	\end{equation*}
\end{lemma}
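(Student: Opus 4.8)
The plan is to prove this ``patching lemma'' by a standard two-step argument: first realise the prescribed Cauchy data at the two interior points $c$ and $d$ using genuine solutions of the inhomogeneous problem on a compact subinterval, and then glue the resulting local function into a globally admissible element of $D_{\mini{\text{max}}}(L_{\mini{\text{max}}})$ by multiplying with a smooth cutoff. First I would pick a point $\xi$ with $a<c<\xi<d<b$ (assuming without loss of generality $c<d$; the case $c=d$ is handled by a single evaluation point and is even easier), and on the compact interval $[c,\xi]$ solve the initial-value problem for $L y = \lambda y$ — equivalently $\mathbf{A}y=0$ — with data $y(c)=\alpha$, $(py')(c)=\beta$; by the classical existence and uniqueness theorem for linear second-order ODEs with $\mathcal{L}^1_{\mini\text{loc}}$ coefficients (here $1/p,q,\mu\in\mathcal{L}^1_{\mini\text{loc}}(I)$, and $c,\xi$ lie in the interior so the coefficients are integrable on $[c,\xi]$), this yields a unique solution $y_1\in AC_{\mini\text{loc}}$ with $py_1'\in AC_{\mini\text{loc}}$ on $[c,\xi]$. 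Symmetrically, on $[\xi,d]$ solve the terminal-value problem with $y(d)=\gamma$, $(py')(d)=\delta$, obtaining $y_2$ on $[\xi,d]$.

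The next step is to interpolate between $y_1$ and $y_2$ on a neighbourhood of $\xi$ so that the patched function is $C^1$ in the appropriate (weighted-derivative) sense, i.e. that both $y$ and $py'$ are continuous across $\xi$. Concretely, I would take a smooth function $\chi$ that equals $1$ near $c$ and $0$ near $d$, supported in $(a,d)$, and another smooth $\psi$ that equals $1$ near $d$ and $0$ near $c$, supported in $(c,b)$, with $\chi+\psi$ not necessarily $1$ but arranged so that near $c$ only $\chi$ is active and near $d$ only $\psi$ is active; then set
\begin{equation*}
y := \chi\, \tilde y_1 + \psi\, \tilde y_2,
\end{equation*}
where $\tilde y_1$ is any fixed $C^2$ extension of $y_1$ to all of $I$ and $\tilde y_2$ likewise extends $y_2$, both extensions being compactly supported away from the endpoints. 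Near $c$ we have $y=\tilde y_1=y_1$, so $y(c)=\alpha$ and $(py')(c)=\beta$; near $d$ we have $y=\tilde y_2=y_2$, so $y(d)=\gamma$ and $(py')(d)=\delta$. Away from neighbourhoods of $a$ and $b$ the function $y$ is manifestly $C^1$ with $py'\in AC_{\mini\text{loc}}(I)$, and it vanishes identically in punctured neighbourhoods of both $a$ and $b$, so $y\in\mathcal{L}^2(I,\mu(r)dr)$ and $\mu^{-1}Ly$, which is supported in a compact subinterval and bounded there (being built from $C^2$ functions and the $\mathcal{L}^1_{\mini\text{loc}}$ coefficients, hence $\mathcal{L}^2$ on compacta), also lies in $\mathcal{L}^2(I,\mu(r)dr)$. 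Therefore $y\in D_{\mini{\text{max}}}(L_{\mini{\text{max}}})$ with the required boundary values.

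The only genuine subtlety — and the step I expect to be the main obstacle to state cleanly rather than hard in substance — is bookkeeping the overlap region: one must ensure the cutoffs $\chi,\psi$ are chosen so that $y$ really does coincide with $y_1$ (resp. $y_2$) on a full neighbourhood of $c$ (resp. $d$), not merely at the point, so that the derivative conditions $(py')(c)=\beta$, $(py')(d)=\delta$ hold; this is arranged simply by taking $\chi\equiv 1$ on a neighbourhood of $c$ with support disjoint from that neighbourhood of $d$ and vice versa, which is possible since $c\neq d$. (For $c=d$ one instead solves a single IVP at $c$ matching $y(c)=\alpha$, $(py')(c)=\beta$ and requiring additionally $y(c)=\gamma$, $(py')(c)=\delta$, which forces $\alpha=\gamma$, $\beta=\delta$ — so in that degenerate case the statement is read with $c\neq d$ implicit, exactly as in \cite{zettl}.) No use is made here of the limit-circle/limit-point classification of $a$ or $b$; the lemma is purely interior, which is why its proof is a verbatim adaptation of Lemma 10.4.1 of \cite{zettl}. $\square$
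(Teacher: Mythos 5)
Your overall strategy---solve interior initial-value problems at $c$ and at $d$, glue, and compactly support the result away from the endpoints so that the limit-point/limit-circle classification never enters---is exactly the standard one behind the result the paper invokes (the paper gives no proof of its own; it defers to Lemma 10.4.1 of \cite{zettl}, i.e.\ Naimark's patching lemma). The observations that the lemma is purely interior and that the IVP is solvable on compact subintervals of $I$ because the coefficients are integrable there are both correct. However, the gluing step as you execute it has a genuine gap under the paper's standing hypotheses, which are only $1/p,q,\mu\in\mathcal{L}^1_{\mini\text{loc}}(I)$, $\mu>0$. First, the IVP solutions are not $C^2$ in this generality---only $y$ and $py'$ are locally absolutely continuous---so ``any fixed $C^2$ extension of $y_1$'' is not available; this is cosmetic, since one can extend by continuing the solution itself. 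The substantive problem is the multiplication by a \emph{smooth} cutoff $\chi$: one has $p(\chi z)'=\chi\,(pz')+p\chi' z$, and since only $1/p$ is assumed locally integrable, $p$ itself need not be locally integrable, let alone absolutely continuous (e.g.\ $p(x)=(x-x_0)^{-2}$ near an interior $x_0$ satisfies $1/p\in\mathcal{L}^1_{\mini\text{loc}}$), so $p(\chi z)'$ can fail to lie in $AC_{\mini\text{loc}}(I)$. Likewise, on the transition region $\mu^{-1}L(\chi z)$ picks up terms proportional to $(p\chi')'z$ and $\chi'(pz')$, and its membership in $\mathcal{L}^2(I,\mu(r)dr)$ requires integrability of quantities like $|g|^2 p^{-2}\mu^{-1}$, which the hypotheses do not guarantee. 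So ``built from $C^2$ functions and the $\mathcal{L}^1_{\mini\text{loc}}$ coefficients, hence $\mathcal{L}^2$ on compacta'' is the step that fails.

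The repair is standard and worth knowing. Either (i) adapt the cutoff to the coefficients, taking $\chi$ with $\chi'=g/p$ for a smooth bump $g$, so that $p\chi'=g$ is smooth and the quasi-derivative survives (this still leaves the $\mathcal{L}^2(\mu)$ issue for $\mu^{-1}Ly$ in full generality); or, better, (ii) follow Naimark/Zettl and build the transition pieces as solutions of \emph{inhomogeneous} problems: on a compact subinterval $[c,d]$ one shows, by variation of parameters, that the map $f\mapsto\bigl(y(d),(py')(d)\bigr)$ for the solution of $\mu^{-1}\bigl(-(py')'+qy\bigr)=f$ with data $(\alpha,\beta)$ at $c$ is onto $\mathbb{C}^2$ as $f$ ranges over $\mathcal{L}^2([c,d],\mu\,dr)$; one then joins the prescribed data at $c$ and $d$ to zero data at interior points $c'<c$ and $d'>d$ in the same way and extends by zero. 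Membership in $D_{\mini\text{max}}(L_{\mini\text{max}})$ is then automatic, because $\mu^{-1}Ly$ equals the chosen $f$ (or $0$) piecewise. In the paper's concrete application the coefficients \eref{Eq: coefficients}--\eref{Eq: coefficients_2} are smooth and positive on the open interval, so your argument does go through there essentially as written; but as a proof of the lemma in the generality in which it is stated, the cutoff step needs one of these fixes.
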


\begin{lemma}(Lagrange bracket decomposition)
	\label{lem: Lagrange bracket decomposition}
 Let $L$ be as per Equation \eqref{eq: sl problem self-adjoint xt} and let $u,v$ be any two solutions lying in $D_{\mini{\text{max}}}(L_{\mini{\text{max}}})$ such that $[u,v](c)=\Lambda$ for all $c\in[a,b]$. It holds that, for all $y,z\in D_{\mini{\text{max}}}(L_{\mini{\text{max}}})$
	\begin{equation*}
		[y,z](c) = -\overline{\Lambda}^{-1}\left\{[y,v](c) \, [\overline{z},\overline{u} ](c)  -  [y,\overline{u}](c) \, [\overline{z},v](c) \right\}.
	\end{equation*}
\end{lemma}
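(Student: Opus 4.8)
The plan is to turn the identity into the Grassmann--Pl\"ucker relation on $\mathbb{C}^2$ and then transport it from the interior of $I$ to the endpoints by continuity. First I would note that the asserted equality is a pointwise statement in $c$, so it suffices to prove it at an arbitrary interior point $c\in I$ and afterwards let $c\to a^+$ and $c\to b^-$.

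For the interior statement I would attach to each $w\in D_{\mini{\text{max}}}(L_{\mini{\text{max}}})$ and each $c\in I$ the vector $\widehat w(c):=\big(w(c),(pw')(c)\big)\in\mathbb{C}^2$, which makes sense since $pw'\in AC_{\mini{\text{loc}}}(I)$. With $\sigma$ the canonical symplectic form $\sigma(\xi,\zeta)=\xi_1\zeta_2-\xi_2\zeta_1$ on $\mathbb{C}^2$, a one-line check gives $[y,z](c)=\sigma\big(\widehat y(c),\overline{\widehat z}(c)\big)$, and likewise that $[y,v]$, $[y,\overline u]$, $[\overline z,\overline u]$, $[\overline z,v]$ and $[u,v]=\Lambda$ are $\sigma$ evaluated on the corresponding pairs drawn from $\widehat y,\overline{\widehat z},\widehat u,\overline{\widehat v}$ at $c$ (conjugation understood componentwise). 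The crucial input is that any three vectors of $\mathbb{C}^2$ are linearly dependent, so that every $A,B,C,D\in\mathbb{C}^2$ obey $\sigma(A,B)\sigma(C,D)-\sigma(A,C)\sigma(B,D)+\sigma(A,D)\sigma(B,C)=0$; specialising to $A=\widehat y(c)$, $B=\overline{\widehat z}(c)$, $C=\widehat u(c)$, $D=\overline{\widehat v}(c)$ and reading the result back through the dictionary above yields $\Lambda\,[y,z](c)=[y,\overline u](c)\,[\overline z,v](c)-[y,v](c)\,[\overline z,\overline u](c)$. Dividing by $\Lambda$ — nonzero by hypothesis, and in the intended application because $u$ (principal) and $v$ (non-principal) are linearly independent while $[u,v]$ is constant on $I$ — gives the claimed identity at every interior $c$, with prefactor $-\Lambda^{-1}$, which coincides with the $-\overline\Lambda^{-1}$ of the statement whenever $u$ and $v$ are chosen real-valued, as is the case in all our applications.

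To conclude, I would pass to the endpoints: every bracket appearing in the identity extends continuously to $[a,b]$ — for $[y,z]$ with $y,z\in D_{\mini{\text{max}}}(L_{\mini{\text{max}}})$ this is precisely the limit statement recalled just before the lemma, and the same applies to the remaining brackets since $u,v$, and hence $\overline u,\overline v$, also lie in $D_{\mini{\text{max}}}(L_{\mini{\text{max}}})$. Taking $c\to a^+$ and $c\to b^-$ in the interior identity then extends it to all $c\in[a,b]$.

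I expect the only real friction to be the interface between the algebra and the analytic framework. At a singular endpoint the individual vectors $\widehat w(r)$ need not converge as $r$ tends to the endpoint — only the brackets do — so the determinant identity genuinely lives on the interior, and the boundary values must be reached by the limiting argument rather than by evaluating at the endpoint directly; one must also track the conjugations carefully so as to land exactly on the combination $[y,v][\overline z,\overline u]-[y,\overline u][\overline z,v]$, and confirm $\Lambda\neq0$ so that the inversion is licit. A slightly heavier alternative would be to verify the identity first for $y,z$ themselves solutions, where it reduces to the classical Jacobi/Pl\"ucker relation for a fundamental system and both sides are constant, and then extend to general $y,z\in D_{\mini{\text{max}}}(L_{\mini{\text{max}}})$ via Lemma~\ref{lem: Naimark Patching lemma} together with a density argument; the symplectic route above seems cleaner and handles $D_{\mini{\text{max}}}(L_{\mini{\text{max}}})$ directly.
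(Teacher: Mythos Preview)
Your argument is correct. The paper does not give its own proof, deferring instead to Zettl's Lemma~10.4.2, where the identity is obtained by a direct expansion of the Lagrange brackets; your symplectic reformulation via the Pl\"ucker relation $\sigma(A,B)\sigma(C,D)-\sigma(A,C)\sigma(B,D)+\sigma(A,D)\sigma(B,C)=0$ on $\mathbb{C}^2$ is an equivalent but cleaner way to organise exactly the same determinant algebra, and your two-step structure---establish the identity at interior points via the vectors $\widehat{w}(c)=(w(c),(pw')(c))$, then pass to $c\to a^+$ and $c\to b^-$ using the existence of the boundary brackets recalled just before the lemma---is precisely the right way to handle singular endpoints, where $\widehat{w}$ itself need not extend.

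Your remark on the prefactor is also on point: the Pl\"ucker computation with $A=\widehat y$, $B=\overline{\widehat z}$, $C=\widehat u$, $D=\overline{\widehat v}$ produces $\sigma(C,D)=[u,v]=\Lambda$, hence $-\Lambda^{-1}$ rather than the $-\overline{\Lambda}^{-1}$ printed in the statement. The two agree exactly when $\Lambda\in\mathbb{R}$, which holds once $u$ and $v$ are real-valued---as they are in Theorem~\ref{thm: theorem self-adjoint extensions generalized robin} and in every application in the paper---so the discrepancy is harmless in context, but your version is the one that follows from the algebra without that extra hypothesis.
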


\begin{lemma}
	\label{lem: that I need for thm}
Let $L$ be as per Equation \eqref{eq: sl problem self-adjoint xt}. Then for any $\lambda\in\mathbb{R}$ and $\alpha,\beta\in\mathbb{C}$, there exists $f\in D_{\mini{\text{max}}}(L_{\mini{\text{max}}})$ such that
	\begin{equation}
		\label{eq: lemma ioejfoijewoijf1}
		[f,u](a) = \alpha, \quad [f,v](a)=\beta.
	\end{equation}
	Moreover, if $a$ is a regular endpoint, then there exists $g\in D_{\mini{\text{max}}}(L_{\mini{\text{max}}})$ such that
	\begin{equation}
		\label{eq: lemma ioejfoijewoijf2}
		g(a) = \alpha, \quad pg'(a)=\beta.
	\end{equation}
\end{lemma}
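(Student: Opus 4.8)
The plan is to prove both assertions by an explicit construction in the spirit of the density/patching lemmas of Sturm--Liouville theory: one takes a suitable linear combination of the principal and secondary solutions at the limit-circle endpoint $a$ --- which by design carries prescribed Lagrange brackets at $a$ --- and multiplies it by a smooth cut-off that turns it off before the limit-point endpoint $b$, so that the product still lies in $D_{\mini{\text{max}}}(L_{\mini{\text{max}}})$. As a preliminary I would normalise the auxiliary solutions: since the coefficients of $L$ and $\lambda$ are real, $(L-\lambda)y=0$ has a real fundamental system and its principal solutions span a one-dimensional subspace, so we may take both $u$ and $v$ real-valued. This is no loss, because passing to complex multiples of $u,v$ only applies an invertible linear change to the pair $\bigl([f,u](a),[f,v](a)\bigr)$, hence surjectivity of $f\mapsto\bigl([f,u](a),[f,v](a)\bigr)$ for one choice gives it for all. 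For real $u,v$ one has the constants $[u,u]\equiv[v,v]\equiv0$ on $I$, while $\Lambda:=[u,v](a)=-[v,u](a)=p\,(uv'-u'v)(a)$ is real and non-zero, the non-vanishing being the linear independence of $u,v$ and the $r$-independence being exactly where $\lambda\in\mathbb{R}$ is used.

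For the first assertion, set $w:=\tfrac{\beta}{\Lambda}u-\tfrac{\alpha}{\Lambda}v$, fix $a<c_1<c_2<b$, choose $\chi\in C^\infty(I)$ with $\chi\equiv1$ on $(a,c_1]$ and $\chi\equiv0$ on $[c_2,b)$, and put $f:=\chi w$. On $(a,c_1]$ one has $f=w$, a solution of $(L-\lambda)y=0$, so $[f,u](r)$ and $[f,v](r)$ are constant there, and letting $r\to a^+$ yields $[f,u](a)=\tfrac{\beta}{\Lambda}[u,u]-\tfrac{\alpha}{\Lambda}[v,u]=\alpha$ and $[f,v](a)=\tfrac{\beta}{\Lambda}[u,v]-\tfrac{\alpha}{\Lambda}[v,v]=\beta$. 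What remains --- and this is the heart of the argument --- is $f\in D_{\mini{\text{max}}}(L_{\mini{\text{max}}})$. Writing $\ell(y):=-(py')'+qy$, so that $Ly=\mu^{-1}\ell(y)$: the condition $pf'\in AC_{\mini{\text{loc}}}(I)$ holds because $p,\chi\in C^\infty(I)$ and $pw'\in AC_{\mini{\text{loc}}}(I)$; near $a$, $f=w$ is square-integrable in a neighbourhood of $a$ (limit circle) and $\ell(w)=\lambda\mu w$ there, so $\mu^{-1}\ell(f)=\lambda w$ is square-integrable too; and on $[c_1,b)$, $f$ has compact support while a short computation gives $\ell(\chi w)=\lambda\mu\,\chi w-(p\chi')'\,w-2\,p\,\chi'\,w'$, which there is continuous with compact support, so $f,\mu^{-1}\ell(f)\in\mathcal{L}^2(I,\mu(r)dr)$. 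Hence $f\in D_{\mini{\text{max}}}(L_{\mini{\text{max}}})$.

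For the second assertion, assume in addition that $a$ is regular. Then every solution of $(L-\lambda)y=0$ and its quasi-derivative $py'$ extend continuously to $a$ (in particular $a$ is limit circle), and there are solutions $y_1,y_2$ with $(y_1(a),(py_1')(a))=(1,0)$ and $(y_2(a),(py_2')(a))=(0,1)$. Setting $g:=\chi(\alpha\,y_1+\beta\,y_2)$ with the same $\chi$ gives $g(a)=\alpha$, $(pg')(a)=\beta$, and $g\in D_{\mini{\text{max}}}(L_{\mini{\text{max}}})$ by the argument just used. Equivalently, at a regular endpoint $[g,u](a)=g(a)(pu')(a)-(pg')(a)u(a)$ and $[g,v](a)=g(a)(pv')(a)-(pg')(a)v(a)$ define a linear map whose $2\times2$ matrix has determinant $[u,v](a)=\Lambda\neq0$, so prescribing $\bigl(g(a),(pg')(a)\bigr)$ is equivalent to prescribing $\bigl([g,u](a),[g,v](a)\bigr)$ and the second part reduces to the first.

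The step I expect to be delicate is the membership $f\in D_{\mini{\text{max}}}(L_{\mini{\text{max}}})$ --- concretely, square-integrability of $\mu^{-1}\ell(f)$ --- which rests on $f$ coinciding with a genuine solution near the limit-circle endpoint (so that $\ell(f)$ is controlled there by $\ell(w)=\lambda\mu w$) and on the smoothness of $\chi$ rendering the interior terms of $\ell(\chi w)$ continuous with compact support; the rest is bookkeeping with Wronskians together with $\Lambda\neq0$. One could instead avoid the cut-off by adjusting $w$ on a compact subinterval by means of Lemma \ref{lem: Naimark Patching lemma}, but the cut-off is shorter; the whole argument parallels Lemma~10.4.6 of \cite{zettl}.
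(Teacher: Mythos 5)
Your construction --- prescribing the brackets via the combination $\tfrac{\beta}{\Lambda}u-\tfrac{\alpha}{\Lambda}v$ near $a$, cutting it off before $b$, and reducing the regular case to prescribed data $\bigl(g(a),(pg')(a)\bigr)$ --- is correct and is essentially the argument the paper relies on, namely the proof of Lemma 10.4.6 of \cite{zettl}, which the paper omits as a ``direct adaptation''. The only caveat is that your smooth cut-off tacitly uses more local regularity of $p$ and $\mu$ than the standing hypotheses $1/p,q,\mu\in\mathcal{L}^1_{\mini{\text{loc}}}(I)$ guarantee (it is harmless here, where the coefficients are smooth on the open interval), and the alternative you mention at the end --- patching via Lemma \ref{lem: Naimark Patching lemma} --- is the route that works in full generality and is the one taken in \cite{zettl}.
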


\begin{remark}
	It is interesting to notice that if $\Lambda = 1$, \Eref{eq: lemma ioejfoijewoijf2} follows directly from \Eref{eq: lemma ioejfoijewoijf1} by setting $g=f$. This is not the case if $\Lambda\neq 1$ and this plays a significant part in the discussion of generalized versus regular boundary conditions in the next sections.
\end{remark}

\noindent The following result concerns properties of the self-adjoint extensions of $L_{\mini{\text{min}}}$, whereas their existence is a direct consequence of Von Neumann lemma \cite[Thm.5.43]{Moretti} since the differential operator $L$ in Equation \eqref{Eq: coefficients} has real coefficients. For its proof we refer to  \cite[Th. 10.4.1]{zettl}, and references therein.

\begin{lemma}
	\label{thm: sa domain d1}
	If $L_{\mini{\text{S.A.}}}$ is a self-adjoint extension of $L_{\mini{\text{min}}}$, then there exists $g\in D_{\mini{\text{S.A.}}}(L_{\mini{\text{S.A.}}})\subset D_{\mini{\text{max}}}(L_{\mini{\text{max}}})$ such that
	\begin{enumerate}
		\item[(1)] $g\notin D_{\mini{\text{min}}}(L{\mini{\text{min}}})$ and $[g,g](a)=0$;
		\item[(2)] $D_{\mini{\text{S.A.}}}(L{\mini{\text{S.A.}}}) = \{f\in D_{\mini{\text{max}}}(L_{\mini{\text{max}}}): [f,g](a)=0\}$.
	\end{enumerate}
	Conversely, for any $g\in D_{\mini{\text{max}}}(L_{\mini{\text{max}}})$ abiding to the conditions in item (1), there exists a self-adjoint extension of $L_{\mini{\text{min}}}$ whose domain  $D_{\mini{\text{S.A.}}}(L{\mini{\text{S.A.}}})$ is defined as per item (2).
\end{lemma}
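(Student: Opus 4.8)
The plan is to recognise Lemma~\ref{thm: sa domain d1} as the Glazman--Krein--Naimark description of self-adjoint extensions, specialised to the present configuration in which $a$ is limit circle and $b$ is limit point, and to reduce it to finite-dimensional symplectic linear algebra on the quotient $D_{\mini{\text{max}}}(L_{\mini{\text{max}}})/D_{\mini{\text{min}}}(L_{\mini{\text{min}}})$, following the proof of Theorem~10.4.1 in \cite{zettl} but using only the lemmas already recorded above. First I would recall the two structural facts: $L_{\mini{\text{min}}}^\ast=L_{\mini{\text{max}}}$, and, since the coefficients of $L$ are real, the deficiency indices of $L_{\mini{\text{min}}}$ coincide; because exactly one endpoint is limit circle and the other limit point they both equal $1$, so that $\dim\bigl(D_{\mini{\text{max}}}(L_{\mini{\text{max}}})/D_{\mini{\text{min}}}(L_{\mini{\text{min}}})\bigr)=2$. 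Self-adjoint extensions $L_{\mini{\text{S.A.}}}$ with $L_{\mini{\text{min}}}\subset L_{\mini{\text{S.A.}}}=L_{\mini{\text{S.A.}}}^\ast\subset L_{\mini{\text{max}}}$ are then in bijection with the subspaces $D$, $D_{\mini{\text{min}}}(L_{\mini{\text{min}}})\subset D\subset D_{\mini{\text{max}}}(L_{\mini{\text{max}}})$, that are maximal among those on which the boundary form vanishes.

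Next I would make the boundary form explicit. Green's formula for the maximal operator gives, for all $f,h\in D_{\mini{\text{max}}}(L_{\mini{\text{max}}})$,
\begin{equation*}
	\langle L_{\mini{\text{max}}}f,h\rangle-\langle f,L_{\mini{\text{max}}}h\rangle=[f,h](b)-[f,h](a).
\end{equation*}
Since $b$ is a limit point, $[f,h](b)=0$ identically on $D_{\mini{\text{max}}}(L_{\mini{\text{max}}})$; this is the genuinely analytic input, proved by the fact that near a limit-point endpoint every maximal-domain element is a graph-norm limit of compactly supported ones, together with the limit-point characterisation in Definition~\ref{def: weyls}. Hence the boundary form is $B(f,h):=-[f,h](a)$. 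I would then check that $B$ descends to the quotient, because $[f,d](a)=0$ whenever $d\in D_{\mini{\text{min}}}(L_{\mini{\text{min}}})=\overline{C^\infty_0(I)}$ (compactly supported functions vanish near $a$), and that $B$ is non-degenerate there: Lemma~\ref{lem: that I need for thm} lets one prescribe $[f,u](a)$ and $[f,v](a)$ independently, and the Lagrange-bracket decomposition of Lemma~\ref{lem: Lagrange bracket decomposition} evaluated at $a$ expresses $B$ through these two functionals, showing it has rank $2$. Thus $\bigl(D_{\mini{\text{max}}}(L_{\mini{\text{max}}})/D_{\mini{\text{min}}}(L_{\mini{\text{min}}}),B\bigr)$ is a two-dimensional non-degenerate skew-Hermitian space.

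By the GKN correspondence, self-adjoint extensions of $L_{\mini{\text{min}}}$ are exactly the preimages in $D_{\mini{\text{max}}}(L_{\mini{\text{max}}})$ of the maximal $B$-isotropic subspaces of the quotient, and in a two-dimensional non-degenerate space every such subspace is one-dimensional, spanned by the class of some $g\notin D_{\mini{\text{min}}}(L_{\mini{\text{min}}})$ with $B(g,g)=0$, i.e. $[g,g](a)=0$; this is item~(1). The corresponding domain is the $B$-orthogonal complement of $[g]$, namely $\{f\in D_{\mini{\text{max}}}(L_{\mini{\text{max}}}):[f,g](a)=0\}$, which contains $D_{\mini{\text{min}}}(L_{\mini{\text{min}}})$ and has codimension $1$ in $D_{\mini{\text{max}}}(L_{\mini{\text{max}}})$; this is item~(2). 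For the converse one simply runs the argument backwards: any $g$ satisfying item~(1) spans a one-dimensional $B$-isotropic subspace, whose $B$-orthogonal complement is, by GKN, the domain of a self-adjoint extension. To see that functions $g$ as in item~(1) exist at all, I would invoke Lemma~\ref{lem: that I need for thm} to produce $g\in D_{\mini{\text{max}}}(L_{\mini{\text{max}}})$ with prescribed, not-both-zero values of $[g,u](a)$ and $[g,v](a)$ (which forces $g\notin D_{\mini{\text{min}}}(L_{\mini{\text{min}}})$), and to take it real --- admissible since $L$ has real coefficients, so $D_{\mini{\text{max}}}(L_{\mini{\text{max}}})$ is conjugation invariant --- whence $[g,g](a)=2i\,p\,\Imag(g\overline{g}')\big|_a=0$ automatically.

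The part I expect to be the main obstacle is not the symplectic linear algebra but the two analytic reductions that feed it: the vanishing $[f,h](b)=0$ on $D_{\mini{\text{max}}}(L_{\mini{\text{max}}})$ forced by the limit-point classification at $b$, and the statement that $B$ is non-degenerate on a two-dimensional quotient (equivalently, $\dim\bigl(D_{\mini{\text{max}}}(L_{\mini{\text{max}}})/D_{\mini{\text{min}}}(L_{\mini{\text{min}}})\bigr)=2$ and $B\not\equiv 0$ there). Both are classical and contained in \cite[Ch.~10]{zettl}; in a self-contained rewriting they are the steps that must actually be reproved, after which Lemmas~\ref{lem: Naimark Patching lemma}, \ref{lem: Lagrange bracket decomposition} and \ref{lem: that I need for thm} supply everything else.
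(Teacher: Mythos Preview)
Your proposal is correct and aligns with the paper's treatment: the paper does not give its own proof of this lemma but simply refers to \cite[Th.~10.4.1]{zettl}, and what you have written is a faithful outline of precisely that Glazman--Krein--Naimark argument, specialised to the limit-circle/limit-point configuration assumed in the surrounding text. Your identification of the two analytic inputs (vanishing of the Lagrange bracket at the limit-point endpoint and the two-dimensionality/non-degeneracy of the boundary quotient) as the only substantive steps is accurate, and your use of Lemmas~\ref{lem: Lagrange bracket decomposition} and~\ref{lem: that I need for thm} to handle the remaining linear algebra is exactly how Zettl organises it.
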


Our quest reaches a finale with the following paramount result, which is specially tailored to befit singular Sturm-Liouville problems. In particular, it is instrumental to relating the existence of multiple self-adjoint extensions to the choice of specific boundary conditions. We include a detailed proof due to its relevance and in light of the fact that it is not exactly the well-known result as per \cite[Thm. 10.4.5]{zettl}, but rather a slight generalization of it. Namely, the principal and secondary solutions are not necessarily normalized to $[u,v]=1$.

\begin{theorem}
	\label{thm: theorem self-adjoint extensions generalized robin} Let $L$ be as in \Eref{eq: sl problem self-adjoint xt}. As per Definition \ref{def: principal and secondary}, let $u$ and $v$ be principal and secondary real-valued solutions at $r=a$ such that $[u,v]=\Lambda$. Then for any $(B_1,B_2)\in\mathbb{R}^2\setminus\{(0,0)\}$,
	\begin{equation}
		\label{eq: theorem self-adjoint extensions generalized robin}
		D_{\mini{\text{S.A.}}}(L_{\mini{\text{S.A.}}})= \{ y\in D_{\mini{\text{max}}}(L_{\mini{\text{max}}}): B_1 [y,u](a)+B_2 [y,v](a)  =  0 \}
	\end{equation}
	identifies the domain of a self-adjoint extension of $L_{\mini{\text{min}}}$. Moreover, all self-adjoint extensions of $L_{\mini{\text{min}}}$ are of this form.
\end{theorem}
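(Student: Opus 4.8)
\emph{Proof strategy.} The plan is to reduce everything to Lemma~\ref{thm: sa domain d1}, which already identifies the domain of every self-adjoint extension of $L_{\mini{\text{min}}}$ with a set of the form $\{f\in D_{\mini{\text{max}}}(L_{\mini{\text{max}}}):[f,g](a)=0\}$ for a suitable $g$, and conversely says that every such $g$ (with $[g,g](a)=0$ and $g\notin D_{\mini{\text{min}}}(L_{\mini{\text{min}}})$) defines one. Hence the whole task is to show that, in both directions, the one-term condition $[f,g](a)=0$ can be rewritten as the two-term condition $B_1[f,u](a)+B_2[f,v](a)=0$ with $(B_1,B_2)\in\mathbb{R}^2\setminus\{(0,0)\}$. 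The bridge is Lemma~\ref{lem: Lagrange bracket decomposition} evaluated at $a$ along the pair $(u,v)$: since $a$ is limit circle all solutions are $\mathcal{L}^2$ near $a$, so the one-sided limits $[\,\cdot\,,u](a),[\,\cdot\,,v](a)$ make sense on $D_{\mini{\text{max}}}(L_{\mini{\text{max}}})$, and since $u,v$ (and $p$) are real one has $[\overline{z},u](a)=\overline{[z,u](a)}$ and $[\overline{z},v](a)=\overline{[z,v](a)}$, so that for all $y,z\in D_{\mini{\text{max}}}(L_{\mini{\text{max}}})$
\begin{equation*}
	[y,z](a)=-\overline{\Lambda}^{-1}\Big([y,v](a)\,\overline{[z,u](a)}-[y,u](a)\,\overline{[z,v](a)}\Big).
\end{equation*}
Keeping $\Lambda$ explicit here is the only modification with respect to \cite[Thm.~10.4.5]{zettl}.

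\emph{Sufficiency.} Given $(B_1,B_2)\in\mathbb{R}^2\setminus\{(0,0)\}$, I would use Lemma~\ref{lem: that I need for thm} to pick $g\in D_{\mini{\text{max}}}(L_{\mini{\text{max}}})$ with $[g,u](a)=-\Lambda B_2$ and $[g,v](a)=\Lambda B_1$. Substituting these into the identity (with $z=g$ and $B_1,B_2$ real) collapses it to $[y,g](a)=B_1[y,u](a)+B_2[y,v](a)$ for every $y\in D_{\mini{\text{max}}}(L_{\mini{\text{max}}})$. Taking $y=g$ gives $[g,g](a)=-\Lambda B_1B_2+\Lambda B_2B_1=0$; and $g\notin D_{\mini{\text{min}}}(L_{\mini{\text{min}}})$ since, using Lemma~\ref{lem: that I need for thm} again to produce elements of $D_{\mini{\text{max}}}(L_{\mini{\text{max}}})$ realizing $([\,\cdot\,,u](a),[\,\cdot\,,v](a))$ equal to $(1,0)$ and $(0,1)$, one checks that $[g,\cdot](a)$ is non-zero on at least one of them because $(B_1,B_2)\neq(0,0)$ and $\Lambda\neq0$. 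So $g$ meets item~(1) of Lemma~\ref{thm: sa domain d1}, whose converse then yields a self-adjoint extension with domain $\{f:[f,g](a)=0\}$, which by the displayed identity is precisely \eqref{eq: theorem self-adjoint extensions generalized robin}.

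\emph{Necessity.} Conversely, starting from an arbitrary self-adjoint extension, the forward part of Lemma~\ref{thm: sa domain d1} produces $g\in D_{\mini{\text{max}}}(L_{\mini{\text{max}}})\setminus D_{\mini{\text{min}}}(L_{\mini{\text{min}}})$ with $[g,g](a)=0$ and domain $\{f:[f,g](a)=0\}$. Setting $\beta:=[g,u](a)$ and $\gamma:=[g,v](a)$, the identity with $y=f$, $z=g$ gives $[f,g](a)=\overline{\Lambda}^{-1}\big(\overline{\gamma}\,[f,u](a)-\overline{\beta}\,[f,v](a)\big)$, while $[g,g](a)=0$ forces $\Imag(\overline{\beta}\gamma)=0$, i.e. $\overline{\beta}\gamma\in\mathbb{R}$, equivalently the ratio $\beta/\gamma$ (or $\gamma/\beta$) is real. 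Because $g\notin D_{\mini{\text{min}}}(L_{\mini{\text{min}}})$ the pair $(\beta,\gamma)$ is not $(0,0)$ — otherwise the identity would force $[\,\cdot\,,g](a)\equiv0$ on $D_{\mini{\text{max}}}(L_{\mini{\text{max}}})$, making $g\in D_{\mini{\text{min}}}(L_{\mini{\text{min}}})$. Then one can choose a non-zero constant $c$ (for instance $c=\overline{\Lambda}^{-1}\overline{\gamma}$ if $\gamma\neq0$, else $c=-\overline{\Lambda}^{-1}\overline{\beta}$) for which $B_1:=\overline{\Lambda}^{-1}\overline{\gamma}/c$ and $B_2:=-\overline{\Lambda}^{-1}\overline{\beta}/c$ are \emph{real} and not both zero; then $[f,g](a)=c\big(B_1[f,u](a)+B_2[f,v](a)\big)$, so the defining condition $[f,g](a)=0$ is exactly $B_1[f,u](a)+B_2[f,v](a)=0$, as claimed.

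\emph{Main obstacle.} I expect the delicate points to be purely in the bookkeeping: one must be careful that $[\,\cdot\,,u](a)$ and $[\,\cdot\,,v](a)$ are well-defined one-sided limits for $D_{\mini{\text{max}}}(L_{\mini{\text{max}}})$ functions even though $u,v$ are genuine solutions that need not be square-integrable near the limit-point endpoint $b$ (hence are not globally in $D_{\mini{\text{max}}}(L_{\mini{\text{max}}})$), and that the reality of $u,v$ is exactly what lets $(B_1,B_2)$ be taken real rather than complex — with the isotropy condition $[g,g](a)=0$ doing the decisive work in the necessity direction. The rest (the applicability of Lemmas~\ref{lem: Lagrange bracket decomposition}--\ref{lem: that I need for thm} and the elementary algebra carrying the factor $\Lambda$) is routine.
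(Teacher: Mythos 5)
Your proposal is correct and follows essentially the same route as the paper: both directions hinge on Lemma~\ref{thm: sa domain d1} for the correspondence between self-adjoint extensions and boundary functions $g$, with Lemma~\ref{lem: Lagrange bracket decomposition} translating $[f,g](a)=0$ into the two-term condition in $[f,u](a)$ and $[f,v](a)$ (you realize $g$ abstractly via Lemma~\ref{lem: that I need for thm} where the paper writes $g=(B_1u+B_2v)/\Lambda$ explicitly, an immaterial difference). If anything, your necessity step is marginally more careful than the paper's, since your case split on $\gamma\neq 0$ versus $\gamma=0$ avoids the degenerate choice $B_1=\overline{C_1}C_1$, $B_2=\overline{C_1}C_2$ collapsing to $(0,0)$ when $C_1=0$.
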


\begin{proof} 
We divide the analysis in two separate parts: proof of the first statement, and proof of the ``moreover'' statement.
	\vskip .2cm
\noindent{\em Part $1.$}
		\begin{itemize}
		\item[] Let $(B_1,B_2)\in\mathbb{R}^2\setminus\{(0,0)\}$. To prove that \Eref{eq: theorem self-adjoint extensions generalized robin} identifies the
		the domain of a self-adjoint extension of $L_{\mini{\text{min}}}$, we shall use Lemma \ref{thm: sa domain d1}. In other words we set $g = \frac{B_1 u + B_2 v}{\Lambda }$ and, using Lemma \ref{lem: Naimark Patching lemma}, we can make sure that $g,u,v\in D_{\mini{\text{max}}}(L_{\mini{\text{max}}})\setminus D_{\mini{\text{min}}}(L_{\mini{\text{min}}})$. It descends
		\begin{equation*}
			[g,g](a) = \frac{1}{\Lambda} (B_1 [g,u](a) + B_2 [g,v](a)) =  0, \quad [g,u](a) = - B_2 \quad [g,v](a) =   B_1.
		\end{equation*}
		Observe that item (2) of Lemma \ref{thm: sa domain d1} is automatically fulfilled by \Eref{eq: theorem self-adjoint extensions generalized robin}.
		\end{itemize}
	\vskip .2cm
{\em Part $2.$}
\begin{itemize}
		\item[]  Consider now $L_{\mini{\text{S.A.}}}$ a self-adjoint extension of $L_{\mini{\text{min}}}$. By Lemma \ref{lem: Lagrange bracket decomposition}, for $y,g\in D_{\mini{\text{max}}}(L_{\mini{\text{max}}})$, it holds
		\begin{align*}
			[y,g](a) &=  -\frac{1}{\Lambda}\left\{[y,v](a) \, [\overline{g},\overline{u} ](a)  -  [y,\overline{u}](a) \, [\overline{g},v](a) \right\}\\
			& =  -\frac{1}{\Lambda} \left\{[y,v](a) \,C_1 -  [y,\overline{u}](a) \, C_2 \right\},
		\end{align*}
		where $C_1:=[\overline{g},\overline{u} ](a)$ and $C_2:=[\overline{g},v](a)$. On account of Lemma \ref{thm: sa domain d1} there exists $g\notin D_{\mini{\text{min}}}(L_{\mini{\text{min}}})$ such that $[g,g](a)=0$. By setting $y=g$ in the equation displayed above and assuming both $u$ and $v$ to be real-valued, it descends
		\begin{align*}
			[g,g](a) =  -\frac{1}{\Lambda} \left\{[g,v](a) \,C_1 -  [g,u](a) \, C_2 \right\}
			= -\frac{1}{\Lambda} \left\{ \overline{C_2}  \,C_1 -  \overline{C_1} \, C_2 \right\}.
		\end{align*}
		Thus
		\begin{equation*}
			[g,g](a) = 0 \iff \overline{C_1} \, C_2 \in\mathbb{R}.
		\end{equation*}
	In addition, still Lemma \ref{thm: sa domain d1} guarantees that $f\in D_{\mini{\text{S.A.}}}(L_{\mini{\text{S.A.}}})$ if and only if $[f,g](a)=0$.
	This reduces to \Eref{eq: theorem self-adjoint extensions generalized robin} setting therein $B_1 = \overline{C_1} C_1$ and $B_2 =  \overline{C_1} C_2$.
\end{itemize}
\end{proof}
\begin{remark} Note that the proof of Theorem \ref{thm: theorem self-adjoint extensions generalized robin} assuming $[u,v]=\Lambda$ is analogous to that on \cite[Thm. 10.4.5]{zettl} that assumes $[u,v]=1$. It is worth mentioning that this normalization does not select a secondary solution. It is easy to see that this is the case if we take into account that $[u,u]=0$ for real-valued $u$. In turn, the reality of both $u$ and $v$ is an essential aspect of the validity of the proof. In addition, a consequence of this restriction is that \Eref{eq: theorem self-adjoint extensions generalized robin} can be equivalently written in terms of the Wronskians instead of the Langrage sesquilinear form, {\it i.e.}
 \begin{equation*}
  B_1\{y,u\}(a)+B_2 \{y,v\}(a) = 0.
 \end{equation*}
\end{remark}

To summarize, Theorem \ref{thm: theorem self-adjoint extensions generalized robin} characterizes all self-adjoint realizations of the Sturm-Liouville problem under consideration. Namely, $L_{\mini{\text{S.A.}}} \,y   = L_{\mini{\text{S.A.}}}^* y$ for $y\in \mathcal{L}^2(I_c, \mu(r)dr)$ such that
$$	 B_1 \{y,u\}(a) + B_2 \{y,v\}(a) = 0 \text{ for }  (B_1, B_2)\in\mathbb{R}^2\setminus\{(0,0)\}.$$

\section{Generalized $(\gamma,v)$-Robin boundary conditions
	\label{sec: generalized gamma v robin}
}

In this section we take a closer look at the boundary condition stated in \Eref{eq: theorem self-adjoint extensions generalized robin} and we reiterate two important facts:
\begin{itemize}
	\item although \Eref{eq: theorem self-adjoint extensions generalized robin} depends on the choice of the pair $(B_1,B_2)\neq(0,0)$, it is always possible to rescale $y\in D_{\mini{\text{max}}}(L_{\mini{\text{max}}})$ so to fix one of the parameters to $1$, {\it i.e.} we can consider only pairs of the form $(1,\frac{B_2}{B_1})$;
	\item the characterization of $D_{\mini{\text{S.A.}}}(L_{\mini{\text{S.A.}}})$ \textit{also depends on the chosen secondary solution}.
\end{itemize}

At the mere level of the Sturm-Liouville problem under consideration, the freedom in the choice of secondary solution is inconsequential if one is interested in characterizing all self-adjoint extensions of the corresponding operator. Notwithstanding, it plays a distinguished, physically relevant r\^{o}le when we turn back to analyzing the dynamics of the Klein-Gordon field ruled by Equation \eqref{eq: KG}. The following definitions aim at highlighting this freedom in the overall process and the difference that occurs when considering a singular rather than a regular Sturm-Liouville problem.

\begin{definition}(Regular $\gamma$-Robin boundary condition)	\label{def: regular Robin}
	Let $L$ be as per \Eref{eq: sl problem self-adjoint xt}. Given any self-adjoint realization $L_{\mini{\text{S.A.}}}$, we say that $y\in D_{\mini{\text{S.A.}}}(L_{\mini{\text{S.A.}}})$ satisfies a {\bf regular $\gamma$-Robin boundary condition} at $a$ if
	\begin{equation*}
		\lim\limits_{r\rightarrow a} \left\{ \cos(\gamma) y + \sin(\gamma) y^\prime\right\} = 0 \quad \text{ for } \quad   \gamma \in [0,\pi),
	\end{equation*}
where the prime indicates the derivative along the $r$-direction. In particular, we say that $y$ abides to a
	\begin{enumerate}
		\item {\em regular Dirichlet boundary condition at $a$} if it satisfies a regular $0$-Robin boundary condition:
		\begin{equation*}
			\lim\limits_{r\rightarrow a}  y = 0  \quad \text{ and } \quad     \lim\limits_{r\rightarrow a} y' = c \in \mathbb{R}.
		\end{equation*}
		\item {\em regular Neumann boundary condition at $a$} if it satisfies a regular $\frac{\pi}{2}$-Robin boundary condition:
		\begin{equation*}
			\lim\limits_{r\rightarrow a} y' = 0   \quad \text{ and } \quad       \lim\limits_{r\rightarrow a} y  = c \in \mathbb{R} .
		\end{equation*}
	\end{enumerate}
\end{definition}

\begin{definition}(Generalized $(\gamma,v)$-Robin boundary condition)
	\label{def: generalized Robin}
		Let $L$ be as per \Eref{eq: sl problem self-adjoint xt} and let $u$ be the principal solution at $a$ and $v$ any secondary solution at $a$, real-valued and such that $\{u,v\}=\Lambda$. Given any self-adjoint realization $L_{\mini{\text{S.A.}}}$, we say that $y\in D_{\mini{\text{S.A.}}}(L_{\mini{\text{S.A.}}})$ satisfies a \textbf{generalized $(\gamma,v)$-Robin boundary condition at $a$ } if
		\begin{equation}
			\label{eq: definition generalized Robin}
			\lim\limits_{r\rightarrow a} \left\{  \cos(\gamma)\{y,u\} +  \sin(\gamma)\{y,v\} \right\}=0   \quad \text{ for } \quad   \gamma \in [0,\pi).
		\end{equation}
	In particular, we say that $y$ abides to a
	\begin{enumerate}
		\item \textit{generalized Dirichlet boundary condition at $a$} if it satisfies a generalized $(0,v)$-Robin boundary condition:
		\begin{equation}
			\label{eq: definition generalized Dirichlet}
			\lim\limits_{r\rightarrow a} \{y,u\} =0.
		\end{equation}
		\item \textit{generalized $v$-Neumann boundary condition at $a$} if it satisfies a generalized $(\frac{\pi}{2},v)$-Robin boundary condition:
		\begin{equation}
			\label{eq: definition generalized Neumann}
			\lim\limits_{r\rightarrow a} \{y,v\} =0.
		\end{equation}
	\end{enumerate}
\end{definition}

It is worth stressing that Definition \ref{def: regular Robin} is applicable only to {\em regular} Sturm-Liouville problems since it implicitly requires differentiability of the solution at the endpoint $a$. In addition, consistently with what one could a priori expect, the ``Dirichlet boundary condition'' is actually independent of the choice of the secondary solution.  In the following, we elucidate more in detail the connection between the two definitions above.

\subsection{Reduction to the regular case}

Consider the setting of Theorem \ref{thm: theorem self-adjoint extensions generalized robin}, and assume that $r=a$ is a regular endpoint as per Definition \ref{def: weyls}. For real-valued $u$ and $v$ such that $\{u,v\}=\Lambda$, a solution $y = \cos(\gamma) u + \sin(\gamma) v \in\mathcal{L}^2(I, \mu(r)dr)$ satisfies a generalized $(\gamma,v)$-Robin boundary condition, as per \Eref{eq: definition generalized Robin}. Hence
\begin{equation}
	\label{eq:owijefoiwjoief}
	[y,u](a) = -\Lambda \sin(\gamma), \quad [y,v](a)= \Lambda\cos(\gamma).
\end{equation}
On account of Lemma \ref{lem: that I need for thm}, since $a$ is a regular endpoint, if we choose $u$ and $v$ such that
\begin{equation*}
	u(a) = 0 \quad \text{ and }\quad   (pv')(a) = 0,
\end{equation*}
then \Eref{eq:owijefoiwjoief} implies that
\begin{equation*}
	y(a) = -\frac{ \Lambda \sin(\gamma)}{pu'(a)} \quad \text{ and }\quad y'(a)= -\frac{\Lambda\cos(\gamma)}{p v(a)}.
\end{equation*}
Hence
\begin{equation*}
	\cos(\gamma)y(a) + \sin(\gamma)y'(a) = - \frac{\Lambda \sin(\gamma)\cos(\gamma)}{p} \left(\frac{1}{u'(a)} + \frac{1}{v(a)} \right)=0\iff v(a) =-u'(a).
\end{equation*}
That is, at a regular endpoint, for a given $u$ there is a choice of secondary solution $v$ for which the generalized $(\gamma,v)$-Robin boundary condition, as per Definition \ref{def: generalized Robin}, yields a regular $\gamma$-Robin boundary condition, as per Definition \ref{def: regular Robin}. Conversely, if we do not choose it in such a way and if $\gamma\neq 0$, then a generalized boundary condition does not necessarily reduce to a regular one.

\subsection{Generalized $(\gamma,v)$-Robin boundary conditions and the Klein-Gordon equation}\label{sec: Sensible dynamics}

In view of the foregoing discussion, it is natural to wonder what is the consequence of choosing a specific generalized $(\gamma,v)$-Robin boundary condition at the level of the fully covariant Klein-Gordon equation. This question becomes especially relevant when we are working on a globally hyperbolic spacetime with timelike boundary \cite{Ake2018dzz}, such as the Poincar\'e patch of an $n$-dimensional anti-de Sitter spacetime (PAdS$_n$). In this case, it is known that the dynamics is completely specified when, and only when, initial data are supplemented with a boundary condition assigned at conformal infinity. Then, one would slavishly follow the analysis outlined in the previous sections.

For definiteness, let us assume we are working under conditions for which Definition \ref{def: generalized Robin} is meaningful. Explicitly, take $\Psi$ to be a solution of the Klein-Gordon equation \eqref{eq: KG} written as the mode expansion given in \Eref{eq: ansatz KG in mode decomposition}. In addition, let $u$ and $v$ be real-valued principal and secondary solutions at an endpoint $a$ for the radial equation, such that the radial mode satisfies a generalized $(\gamma,v)$-Robin boundary condition. One can infer that the latter translates to a boundary condition on $\Psi$:
\begin{equation}
	\label{eq:bc psi generalized robin}
	\cos(\gamma) \{\Psi,\widetilde{u}\}(a) + \sin(\gamma)\{\Psi, \widetilde{v}\}(a) = 0,
\end{equation}
where
\begin{align*}
	&\widetilde{u} =  \int_{\sigma(\Delta_j^{n-2})}d\mu(\Sigma_j^{n-2}) \int_{\sigma(\mathbf{A})}d\lambda \,e^{-i\sqrt{\lambda} t}u(r)Y_j(\varphi_1,...,\varphi_{n-2}), \\
	&\widetilde{v} =  \int_{\sigma(\Delta_j^{n-2})}d\mu(\Sigma_j^{n-2}) \int_{\sigma(\mathbf{A})}d\lambda \,e^{-i\sqrt{\lambda} t}v(r)Y_j(\varphi_1,...,\varphi_{n-2}),
\end{align*}
where $\int_{\sigma(\Delta_j^{n-2})}d\mu(\Sigma_j^{n-2})$ is the integral over the spectrum of the Laplace operator on $\Sigma_j^{n-2}$. Details regarding the latter are left to the reader since they play no r\^{o}le in our discussion, yet we observe that if $j=1$ this integral reduces to a sum of (hyper-)spherical harmonics, whereas if $j\in\{0,1\}$ it is nothing but an ordinary Lesbegue integral. Similarly, $\int_{\sigma(\mathbf{A})}d\lambda$ is formally the integral over the spectrum
of the self-adjoint extension $\mathbf{A}$ with respect to the associated spectral measure. For all practical purposes in many instances $\sigma(\mathbf{A})=(0,\infty)$ and the integral reduces to a standard Lesbegue integration on the half real line.

With the discussion of Section \ref{sec: generalized gamma v robin} in mind, we see that \Eref{eq:bc psi generalized robin} reduces to a regular Robin boundary condition only under special conditions. In addition, we highlight that generalized $(\gamma,v)$-Robin boundary conditions on $R(r)$ translate at the level of the Klein-Gordon equation, to a wide variety of boundary conditions, including time-dependent ones. Although at this stage, this statement might be elusive and hidden in the meanders of \Eref{eq:bc psi generalized robin}, it is manifest in the concrete example thoroughly discussed in Section \ref{sec: Time-dependence of the boundary conditions HO}.

\section{The quantum dynamics
	\label{sec: The quantum dynamics}
}
The analysis of the classical solutions to the Klein-Gordon equation is just the starting point to obtain a full-fledged, covariant quantization framework. In this paper we shall not give all the details of the latter, see \cite{Dappiaggi2018xvw,Bussola2017wki,Dappiaggi2016fwc,Dappiaggi2022dwo,Campos2021wyr,deSouzaCampos2020bnj}, rather we focus on the construction of ground states admitting generalized $(\gamma,v)$-Robin boundary conditions.

\begin{definition}\label{Def: two-point function}
	Let $\mathcal{M}$ and $P$ be as defined in Section \ref{sec: The Klein-Gordon equation}. A two-point function of a quantum state is a bidistribution $\psi_2\in\mathcal{D}'(\mathcal{M}\times\mathcal{M})$ such that
	\begin{enumerate}
		\item it solves the Klein-Gordon equation in both entries: $$(P\otimes\mathbbm{1}) \psi_2 = (\mathbbm{1}\otimes P) \psi_2 =0;$$
		\item it satisfies the {\em canonical commutation relations}:
		$$E(f,f'):=\psi_2(f,f') -  \psi_2(f',f),\quad\forall f,f^\prime\in C^\infty_0(M),$$
		where $E$ is the advanced minus retarded fundamental solution associated to $P$; 
		\item it is positive:
		$$\frac{1}{4}\left|E(f,f^\prime)\right|^2\leq\psi_2(f,f)\psi_2(f^\prime,f^\prime)\quad\forall f,f^\prime\in C^\infty_0(M).$$
		\end{enumerate}
\end{definition}

 In turn, the bi-distribution $E\in\mathcal{D}^\prime(\mathcal{M}\times\mathcal{M})$ is a solution of the initial value problem:
\begin{subequations}
\begin{align}\label{eq: causal propagator}
&(P\otimes\mathbbm{1}) E = (\mathbbm{1}\otimes P) E =0,\\
&E|_{\mini{\Sigma_t\times \Sigma_t}} =0    \text{   and   }  \nabla_\xi E|_{\mini{\Sigma_t\times \Sigma_t}} =\delta_{\Sigma_t},
\end{align}
\end{subequations}
where $\Sigma_t$ is any constant-time hypersurface, while $\delta_\Sigma$ is the Dirac delta thereon. It is important to stress that $E$ is a priori not unique, depending both on the underlying geometry and on the parameters $\xi$ and $m_0$ of the Klein-Gordon equation, see \Eref{eq: KG}. The details for its construction using tools of Sturm-Liouville and spectral theories can be found in \cite{DeSouzaCampos2022wsp} and references therein.

As detailed in \cite[Ch.2]{DeSouzaCampos2022wsp} and hinted at in the Introduction, among the plethora of two-point functions on a spacetime admitting Schwarzschild-like coordinates, as per \Eref{eq:metric schwarzschild-like}, one can always distinguish the ones that characterize {\em ground states}. They are of the form
\begin{align}
	\label{eq: def 2 point ground state schd coord}
	\psi_2(x,x^\prime)= &  \lim_{\varepsilon\rightarrow 0^+} \int_{\sigma(\Delta_{j})}d\eta_j\int_{\mathbb{R}}d\omega \Theta(\omega) e^{-i\omega (t - t'- i\varepsilon)} \widetilde{\psi}_{2}(r,r')Y_{j}(\underline{\theta})\overline{Y_{j}(\underline{\theta}')},
\end{align}
where $\Theta$ denotes the Heaviside step function and, for compactness, we have introduced the notation $\underline{\theta}=(\varphi_1,...,\varphi_{n-2})$. Using \Eref{eq: causal propagator} in combination with the canonical commutation relations in Definition \ref{Def: two-point function} and with the completeness of the eigenefunctions of the Laplace operator $\Delta_{\Sigma_j^{n-2}}$, it turns out that the unknown $ \widetilde{\psi}_2(r,r^\prime)$ can be obtained by the spectral resolution of the Green function $\mathcal{G}(r,r') $ associated to \Eref{eq: radial equation AR}, see \cite[Ch.7]{greenBook}. Namely, promoting $\lambda$ to a complex variable, one make use of the chain of identities
\begin{equation}
	\label{eq:spectral resolution radial green function}
	\int_{\mathbb{R}}\omega d \omega  \widetilde{\psi}_2(r,r^\prime)   = \frac{\delta(r-r')}{\mu(r)} = -\frac{1}{2\pi i}\oint_{\mathcal{C}^\infty} d\lambda \mathcal{G}(r,r'),
\end{equation}
where $\mathcal{C}^\infty$ is an infinitely large circle in the $\lambda$-plane with a counter-clockwise orientation.

\begin{remark}
Each generalized $(\gamma,v)$-Robin boundary condition on the radial mode $R(r)$ yields a different Green function $\mathcal{G}(r,r') $ for the radial equation and, consequently, a different two-point function---a different ground state.
\end{remark}

In the next section we give a neat and tangible example that unveils how the choice of different secondary solutions, even with the same value of $\gamma$, yields well-defined but inequivalent ground states. It corroborates our statement that, in a system where states can be constructed following a mode decomposition, the choice of a secondary solution for the radial equation remains free even after imposing all the physical constraints necessary to guarantee a sensible framework.

\section{An illustrative example: the wave equation on $\mathbb{R}\times\mathbb{R}_+$}
	\label{sec: example harmonic oscillator}

In this section, we outline a simple yet most illustrative example aimed at highlighting the relevance of the generalized $(\gamma,v)$-Robin boundary conditions: a massless, real, scalar field on the $2$-dimensional half-Minkowski spacetime $\mathbb{R}\times\mathbb{R}_+$. Although the endpoints to be considered in the corresponding Sturm-Liouville problem are regular, since the underlying manifold is globally hyperbolic with a timelike boundary, it is conceivable to impose thereon generalized $(\gamma,v)$-Robin boundary conditions. 

In standard Cartesian coordinates $(t,x)$, for $t\in\mathbb{R}$ and $x\in(0,\infty)$, the mode decomposition ansatz as per \Eref{eq: ansatz KG in mode decomposition} reads
\begin{equation*}
	\Psi_\omega(t,x) = e^{-i\omega t} \psi_\omega(x),
\end{equation*}
where $\psi_\omega(x)$ is nothing but an harmonic oscillator on the half-line:
\begin{equation*}
	\frac{d^2 \psi(x)}{dx^2} = - \omega^2 \psi(x), \quad \forall x\in (0,\infty).
\end{equation*}
Consider the basis of solutions given by 
\begin{equation*}
	y_1(x) := e^{+i\omega x} \quad \text{ and }\quad  y_2(x) := e^{-i\omega x}.
\end{equation*}
As functions of $x$ and considering $\omega$ as a possibly complex parameter rather than a Fourier variable conjugated to time $t$, it follows that $y_1(x)$ and $y_2(x)$ are not square-integrable at $x=\infty$ unless $\Imag(\omega)>0$ and $\Imag(\omega)<0$, respectively. Therefore, as per Definition \ref{def: weyls}, this endpoint is a limit point 
and the most general square-integrable solution therein can be written concisely as
\begin{equation}
	\label{eq: square-integrable solution at infinity harmonic oscillator}
	\psi_\infty(x) =  y_1(x) \Theta(\Imag(\omega)) +  y_2(x) \Theta(-\Imag(\omega)).
\end{equation}
Both $y_1(x) $ and $y_2(x)$ are square-integrable in a neighborhood of $x=0$. Still according to Definition \ref{def: weyls}, $x=0$ is a limit circle.

Using the nomenclature of Definition \ref{def: principal and secondary}, the principal solution reads
\begin{equation*}
	u(x) = \frac{y_1(x) - y_2(x)}{2i}= \sin(\omega x) .
\end{equation*}
As a secondary solution at $x=0$, we shall consider three possible choices:
\begin{align*}
	v_1 &=\omega \frac{y_1(x) + y_2(x)}{2}= \omega \cos(\omega x), \\
	v_2 &=\frac{y_1(x) + y_2(x)}{2}= \cos(\omega x),\\
	v_3 &=\frac{y_1(x) + y_2(x)}{2} + \frac{y_1(x) - y_2(x)}{2i}= \cos(\omega x) + \sin(\omega x).
\end{align*}
For convenience, define $s := \text{sign}(\Imag(\omega))$, $\Imag(\omega)\neq 0$. We can write $\psi_\infty$ in terms of the principal and secondary solutions as
\begin{equation*}
	\psi_\infty(x) = a_\kappa u + b_\kappa v_\kappa,
\end{equation*}
where $a_1 = a_2 = a_3 + 1 = i\,s$ and $\omega b_1 = b_2 = b_3 = 1$.
Note that $  \psi_\infty(x)$ itself is independent of $\kappa$.
In addition, the solution
\begin{equation}
	\label{eq: square-integrable solution at zero harmonic oscillator}
	\psi_\kappa (x)= \cos(\gamma) u(x) + \sin(\gamma) v_\kappa(x), \quad \text{for }\kappa\in\{1,2,3\}, \, \gamma\in\mathbb{R},
\end{equation}
satisfies a generalized $(\gamma,v_\kappa)$-Robin boundary condition at $x=0$, {\it i.e.}
\begin{equation*}
	\lim\limits_{x\rightarrow 0} \left\{ \cos(\gamma) \{\psi_\kappa, u\} + \sin(\gamma) \{\psi_\kappa,v_\kappa]\}\right\} =0.
\end{equation*}

\subsection{Generalized versus regular Robin boundary conditions}
Since the limits of $y_1$, $y_2$ and of their derivatives exist as $x\rightarrow 0$, we can cast the generalized $(\gamma,v_\kappa)$-Robin boundary condition above as a regular $\gamma$-Robin boundary condition:
\begin{equation*}
	\lim\limits_{x\rightarrow 0}\left(\psi_\kappa - \frac{1}{\beta_\kappa}\psi_\kappa'\right) = 0, \quad \text{ for }\quad \beta_\kappa = \frac{\cos(\gamma) u'(0) + \sin(\gamma) v_\kappa'(0)}{\cos(\gamma) u(0) + \sin(\gamma) v_\kappa(0)}.
\end{equation*}
We find that
\begin{align*}
	\beta_1 &= \cot(\gamma), \\
	\beta_2 &= \omega \cot(\gamma),\\
	\beta_3 &= \omega (\cot(\gamma) +1),
\end{align*}
which highlights the difference between a generalized and a regular Robin boundary condition. Markedly, in the regular scenario we can choose a secondary solution based on the property of obtaining a frequency-independent parameter $\beta_\kappa$; in this case, $\beta_1$. Yet, frequency-dependent boundary conditions are also of physical relevance and hence there is no a priori reason to discard them.

\subsection{Time-dependence of the boundary conditions
	\label{sec: Time-dependence of the boundary conditions HO}}
Analogously to the discussion in Section \ref{sec: Sensible dynamics}, given a radial solution of the wave function that satisfies a generalized $(\gamma,v_\kappa)$-Robin boundary condition, it is legitimate to wonder which is the corresponding boundary condition satisfied by the solution of the wave equation on $\mathbb{R}\times\mathbb{R}_+$. For $\psi_\kappa$ given by \Eref{eq: square-integrable solution at zero harmonic oscillator}, let us consider a general solution:
\begin{equation*}
	\Psi_{\kappa}(t,x) = \int\limits_{\mathbb{R}}d\omega e^{-i\omega t} \psi_{\kappa}(x).
\end{equation*}
It can be written as
\begin{equation}
	\label{eq: full solution harmonic oscillator}
	\Psi_{\kappa}(t,x) = \cos(\gamma) \widehat{u} +\sin(\gamma) \widehat{v_\kappa},
\end{equation}
where, with a slight abuse of notation, we have denoted the Fourier transform with a hat. We find
\begin{align*}
	&\widehat{u} = i\pi (\delta(t+x) - \delta(t-x)),\\
	&\widehat{v_1} = i\pi (\delta'(t+x) + \delta'(t-x))\\
	&\widehat{v_2} = \pi (\delta(t+x) + \delta(t-x))\\
	&\widehat{v_3} = \pi ((1+i)\delta(t+x) + (1-i)\delta(t-x).
\end{align*}
We can now read which boundary condition is satisfied by $\Psi_{\kappa}(t,x)$ at $x=0$ for each $\kappa$. Formally, working at the level of distributions, we look for operators such that
\begin{align*}
	\lim\limits_{x\rightarrow 0}  (c_1 \Psi_{1}(t,x) + \partial_x \Psi_{1}(t,x)) &= 2i\pi \delta'(t)(c_1 \sin(\gamma)+\cos(\gamma)) =0;\\
	\lim\limits_{x\rightarrow 0} (c_2 \Psi_{2}(t,x) +  \partial_x \Psi_{2}(t,x)) & = 2i\pi (c_2 \delta(t) \sin(\gamma)+ i \delta'(t) \cos(\gamma)) =0;\\
	\lim\limits_{x\rightarrow 0} ( c_3  \Psi_{3}(t,x) +\partial_x \Psi_{3}(t,x)) &= 2\pi (c_3 \delta(t) \sin(\gamma) +  i \delta'(t) (\cos(\gamma) + \sin(\gamma)) =0.
\end{align*}
The above are satisfied if we take $c_\kappa$ and $\zeta_\kappa$ as
\begin{align*}
	&c_1=1,\\
	&c_2 =  -i\cot(\gamma) \partial_t =:\zeta_2 \partial_t,\\
	&c_3 =-i \cot(\gamma) (1+\tan(\gamma))\partial_t =: \zeta_3 \partial_t .
\end{align*}
Accordingly, the solution $\Psi_\kappa(t,x)$ satisfies, for $\kappa \in\{2,3\}$:
\begin{equation*}
	\lim\limits_{x\rightarrow 0}  (\zeta_\kappa \partial_{t} \Psi_{\kappa}(t,x) + \partial_x \Psi_{\kappa}(t,x)) =0.
\end{equation*}
That is, the solutions $\Psi_2(t,x)$ and $\Psi_3(t,x)$ do not satisfy a regular, time-independent, $\gamma$-Robin boundary condition at the boundary as it does $\Psi_1(t,x)$.


\subsection{The Green functions}
To construct the ground state for a Klein-Gordon field admitting generalized $(\gamma,v_\kappa)$-Robin boundary conditions on the two-dimensional half-Minkowski spacetime, we follow the rationale outlined in Section \ref{sec: The quantum dynamics}. On account of \Eref{eq:spectral resolution radial green function}, the only unknown is $\widetilde{\psi}_2(x,x^\prime)$, which in turn can be constructed from the Green function $\mathcal{G}(x,x^\prime)$ of the radial equation. Note that, in the case in hand, the r\^{o}le of the radial coordinate $r$ is played by the cartesian coordinate $x$.

Let the Wronskian between the principal and the secondary solutions be given by $ \{u,v_\kappa\} =: \Lambda_\kappa$. 
Then the one between the two general square-integrable solutions given by \Eref{eq: square-integrable solution at infinity harmonic oscillator} and \Eref{eq: square-integrable solution at zero harmonic oscillator} reads
\begin{align*}
	\{\psi_\kappa(x),\psi_\infty(x)\} 
	&= \Lambda_\kappa  (b_\kappa \cos(\gamma) - a_\kappa \sin(\gamma)) .
\end{align*}
Therefore, the Green functions $\mathcal{G}_\kappa(x,x')$ for the different choices of secondary solutions, which satisfy
\begin{equation*}
	\left(\frac{d^2}{dx^2} + \omega^2 \right) \mathcal{G}_\kappa(x,x') = \left(\frac{d^2}{dx'^2} + \omega^2 \right) \mathcal{G}_\kappa(x,x') = \delta(x-x'),
\end{equation*}
are given by
\begin{equation}
	\label{eq: definition green function harmonic oscillator general k}
	\mathcal{G}_\kappa(x,x') := \frac{\psi_\kappa(x_<) \psi_\infty(x_>)}{\{\psi_\kappa,\psi_\infty\}}, \quad \text{for } \omega^2\notin [0,\infty),
\end{equation}
where $(x_<, x_>) = (x, x')$ if $x<x'$ and  $(x_<, x_>) = (x', x)$, otherwise. Observe that the dependence on $\omega$ is implicit in the solutions $\psi_\kappa$ and $\psi_\infty$. Moreover, for $\gamma\in\mathbb{R}$, it holds true that
\begin{align*}
	(b_\kappa \cos(\gamma) -   a_\kappa \sin(\gamma)) = 0 \iff \cot(\gamma) = \frac{a_\kappa}{b_\kappa} \iff k=1 \text{ and } \gamma\in\left(\frac{\pi}{2},\pi\right).
\end{align*}
For notational convenience, but with no loss of generality, let us consider $x<x'$ and $\gamma\in [0,\frac{\pi}{2}]$ in the remainder of this section. Once more we fix $s := \text{sign}(\Imag(\omega))$, $\Imag(\omega)\neq 0$. Explicitly, the Green functions can be written as
\numparts
\label{eq: green functions exponential form harmonic oscillator explicitly all three k}
\begin{align*}
	\mathcal{G}_1(x,x') &=  \frac{-i s}{2\omega} \left\{ e^{-i s \omega (x-x')} +  e^{i s \omega (x+x')}\left[1 - \frac{2}{1-is\omega \tan(\gamma) } \right] \right\},\\
	\mathcal{G}_1(x,x') &=  \frac{-i s}{2\omega}  \left\{ e^{-i s \omega (x-x')} + e^{i s \omega (x+x')}\left[1 - \frac{2}{1-is \tan(\gamma) } \right]  \right\},\\
	\mathcal{G}_3(x,x') &=  \frac{-i s}{2\omega} \left\{ e^{-i s \omega (x-x')} +  e^{i s \omega (x+x')}\left[1 - \frac{2}{1+is(1+\cot(\gamma)) } \right] \right\}.
\end{align*}
\endnumparts

\subsection{The resolution of the identity}

For all three values of $\kappa$, the suitable contour to be considered for the integration of $\mathcal{G}_\kappa$ is the ``pac-man'' in the $\omega^2$-complex plane, which is tantamount to integrating $2 \omega \mathcal{G}_\kappa$ in one semi-disk in the upper or lower $\omega$-complex plane, as illustrated in Figure \ref{fig:contour}. Note that although $\mathcal{G}_1$ and $\mathcal{G}_3$ diverge in the limit $\omega\rightarrow0$, $\omega \mathcal{G}_\kappa$ has no poles in the $\omega$-complex plane. In addition, since Jordan's lemma holds true, it follows that
\begin{align}
	\label{eq: contour integral equal real-line integral}
	-\frac{1}{2\pi i}\oint\limits_{\text{"pac-man"}}& \mathcal{G}_\kappa(x,x')d\omega^2 =\lim\limits_{R\rightarrow\infty} \int\limits_{-R}^{R}\omega\,d\omega \lim\limits_{\varepsilon\rightarrow 0^+} [\mathcal{G}_\kappa(x,x')|_{\omega+i\epsilon}-\mathcal{G}_\kappa(x,x')|_{\omega-i\epsilon}] \\
\end{align}
\begin{figure}[H]
	\centering
	\includegraphics[width=.8\textwidth]{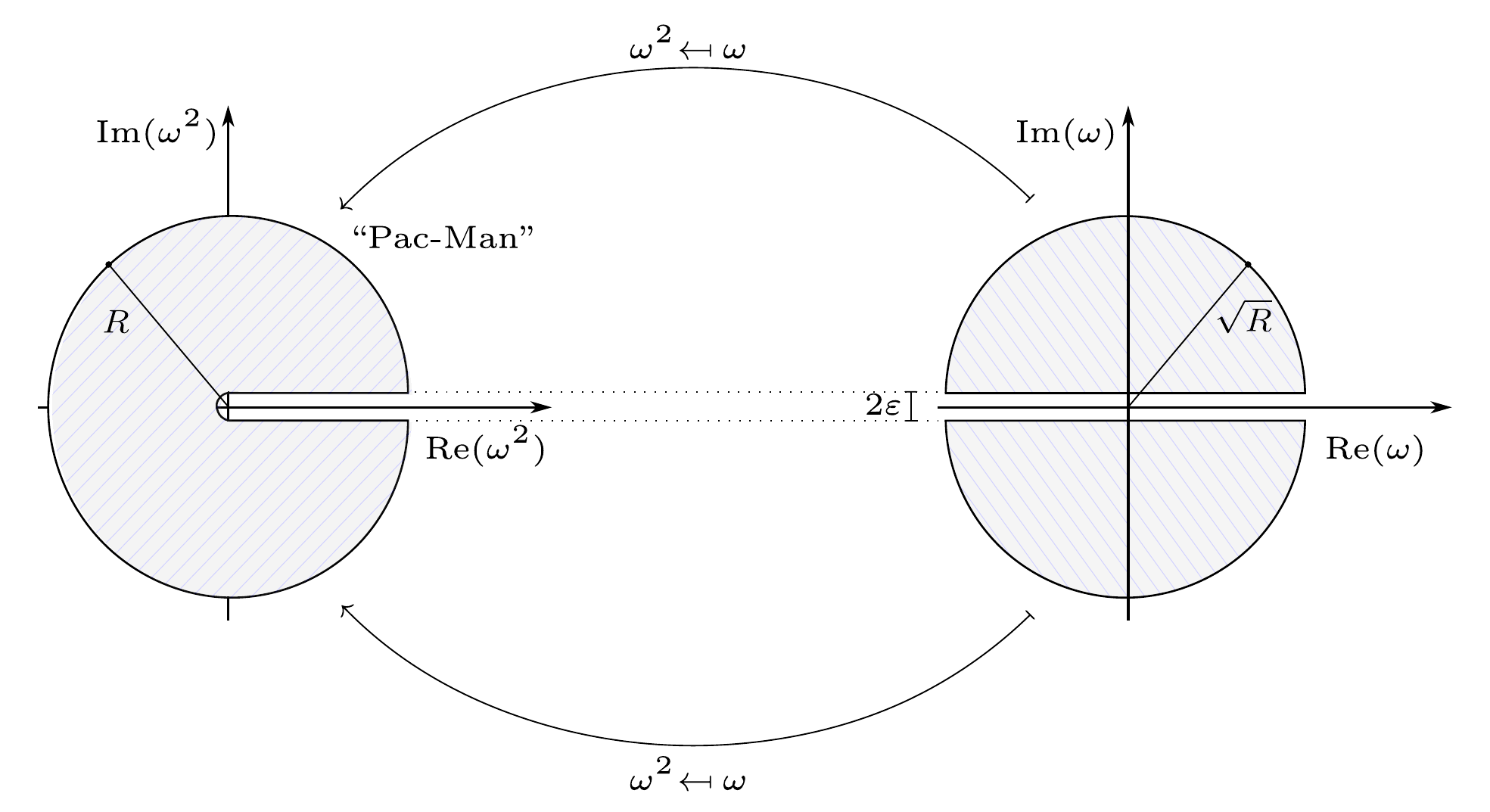}%
	\caption{Pac-Man and semi-disks contours.}
	\label{fig:contour}
\end{figure}

Next, let us outline the computation of the integral given in \Eref{eq: contour integral equal real-line integral}, to highlight the fact that all three possible choices of a secondary solution does yield the resolution of the identity, as per \Eref{eq:spectral resolution radial green function}. For $\mathcal{G}_1$, this computation is standard, \cite[Ex.7.3.2,Pg.454]{greenBook}. Yet, we do provide a step-by-step solution for all $\kappa$ in a supplementary notebook available online \cite{gitgit}.

Observe that the common part of the integrand coming from the Green functions in \Eref{eq: green functions exponential form harmonic oscillator explicitly all three k} yields
\begin{align*}
	-\frac{1}{2\pi i}\oint\limits_{\text{"pac-man"}}\left\{\frac{-i s}{2\omega} \left[e^{-i s \omega (x-x')} + e^{i s \omega (x+x')}\right]\right\} d\omega^2 
	& = \delta(x-x') + \delta(x+x').
\end{align*}
The remaining terms of $\mathcal{G}_1$ and $\mathcal{G}_3$ are obtained analogously to the above. For $\mathcal{G}_1$, we use the Fourier transform
\begin{align*}
	\frac{1}{2\pi}\int\limits_{-\infty}^{\infty} dy  e^{i ay}  \frac{1}{1\pm i y} = e^{\mp a} \Theta(\pm a),
\end{align*}
and the extra term gives a vanishing contribution. Since both $x$ and $x'$ are positive, we obtain \Eref{eq:spectral resolution radial green function} for each $\kappa$:
\begin{align*}
	&-\frac{1}{2\pi i}\oint\limits_{\text{"pac-man"}} \mathcal{G}_\kappa(x,x')d\omega^2  = \delta(x-x') + f_\kappa(\gamma)\delta(x+x')= \delta(x-x'),
\end{align*}
where we have introduced the auxiliary functions
\begin{align*}
	&f_1(\gamma) = 1,\\
	&f_2(\gamma)= - \cos(2\gamma),\\
	&f_3(\gamma) = -  \left[\frac{2 \cos (\gamma ) (2 \sin (\gamma )+\cos (\gamma ))}{2 \sin (2 \gamma )-\cos (2 \gamma )+ 3}  \right].
\end{align*}
With the spectral resolutions above, we can construct three one-parameter families of two-point functions, since it holds that
\begin{align*}
	-\frac{1}{2\pi i}\oint\limits_{\text{"pac-man"}} \mathcal{G}_\kappa(x,x')d\omega^2& = \int\limits_{-\infty}^{\infty} \omega d\omega  \frac{\psi_\kappa(x)\psi_\kappa(x')}{\mathcal{N}_\kappa},
\end{align*}
where the normalizations are given by
\begin{align*}
	\mathcal{N}_1&= \pi\omega[\cos^2(\gamma) +\omega^2 \sin^2(\gamma)],\\
	\mathcal{N}_2 &= \pi \omega \\
	\mathcal{N}_3 
	&= \pi \omega \frac{3 - \cos(2\gamma) + 2\sin(2\gamma)}{2}.
\end{align*}
\subsection{The two point functions}
Directly from the spectral resolution, as stated in Section \ref{sec: The quantum dynamics}, we obtain the spatial part of the two-point function in each case:
\begin{equation*}
	\widetilde{\psi}_\kappa(x,x') = \frac{\psi_\kappa(x)\psi_\kappa(x')}{\mathcal{N}_\kappa}.
\end{equation*}
With the three choices of secondary solutions we obtain three one-parameter families of two-point functions for three different ground states in the $2$-dimensional half-Minkowski spacetime:
\begin{align}
	\label{eq: ho 2point functions}
	\Psi_\kappa(t,x,t',x') = \int\limits_{0}^\infty d\omega e^{i \omega (t-t' - i 0^+)} \widetilde{\psi}_\kappa(x,x').
\end{align}

\begin{remark} Figure \ref{fig:ho} shows the behavior of $\widetilde{\psi}_\kappa(x,x')$ at coinciding points $x=x'=1$, for $\omega=-\Omega \in (0, 5)$, and for different values of $\gamma$. It manifests the inequivalence between the integrands of \Eref{eq: ho 2point functions} for different values of $\kappa$. One may ponder on the significance of such discrepancy, since it could be the case that different integrands yield equal integrals (as it happens for the resolution of the identity, for example). Yet, as it turns out, the term $\widetilde{\psi}_\kappa(x,x)$ has a physical interpretation: it characterizes the probability of de-excitations of a two-level system with energy gap $\Omega$, at a fixed spatial position $x$ and interacting for an infinite time with the quantum field in the ground state specified by $\Psi_\kappa(t,x,t',x')$. Such physical observable has been extensively used in the last years to probe a wide range of characteristics of the underlying quantum field theoretical framework, see \cite{DeSouzaCampos2022wsp,Louko2007mu} and the references therein.
	\begin{figure}[H]
		\centering
		\includegraphics[width=.48\textwidth]{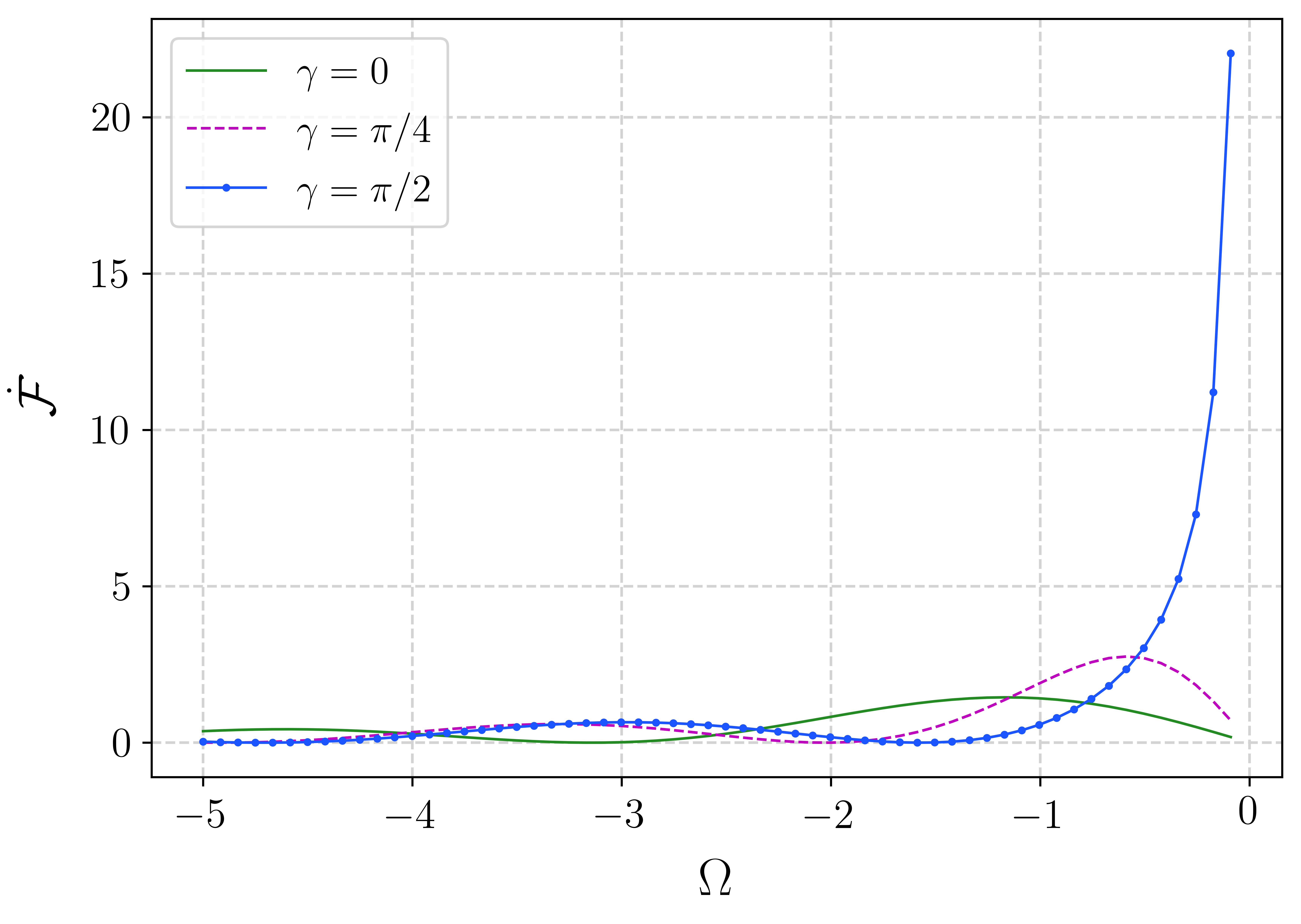}\hspace{.5cm}%
		\includegraphics[width=.48\textwidth]{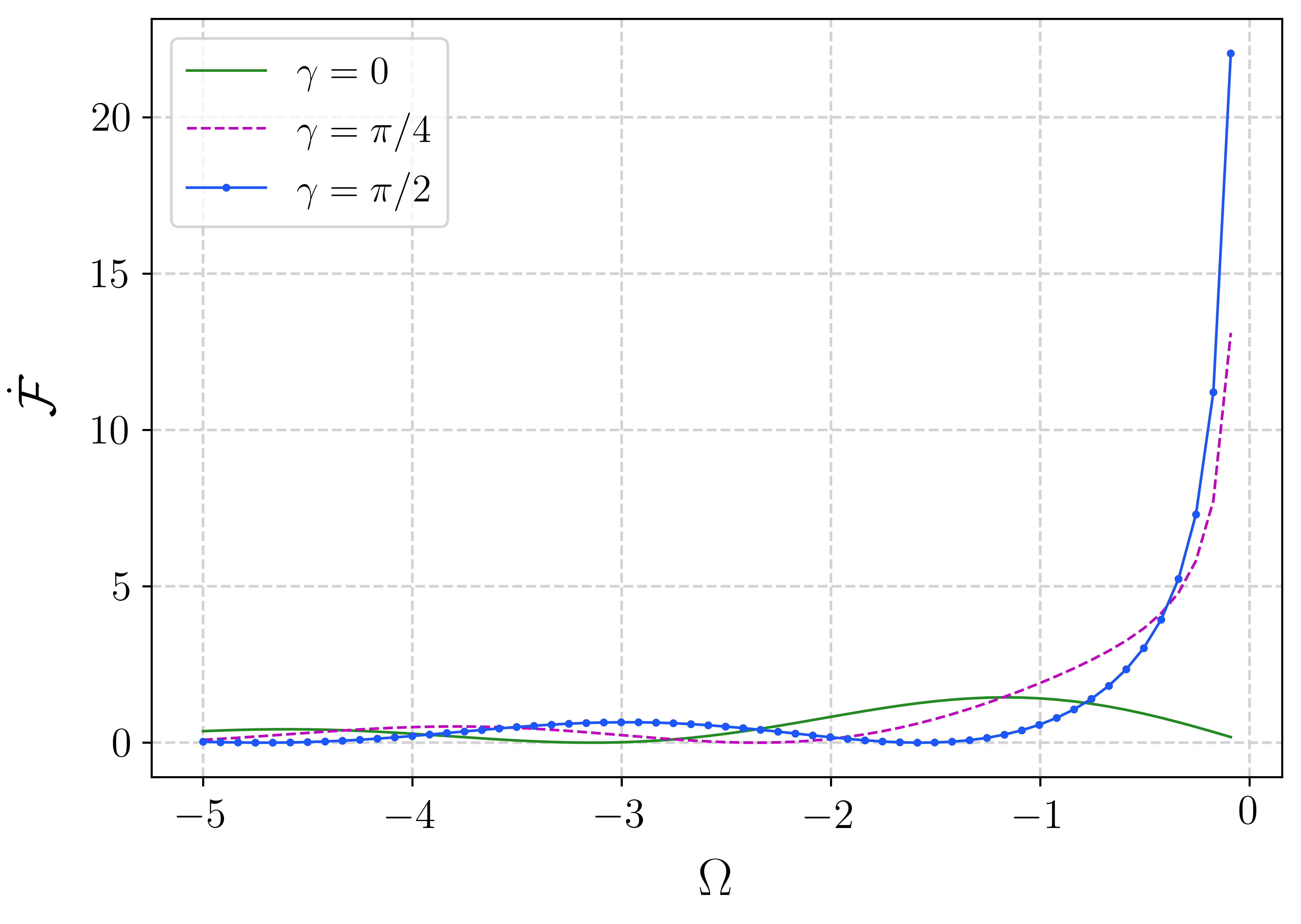}
		\caption{Behavior of $\dot{\mathcal{F}}=\widetilde{\psi}_\kappa(x,x)$ for $x=1$, $\omega = -\Omega$ and $\gamma\in\{0, \pi/4,\pi/2\}$. On the left, for $\kappa=1$. On the right, for $\kappa=2$.}
		\label{fig:ho}
	\end{figure}
\end{remark}
%
\section{Conclusion
	\label{sec: conclusions}}

In this work we have highlighted the existence of a hidden freedom in the standard procedure of constructing ground states for a real scalar field in a large class of static backgrounds. In particular we have observed that, when working at the level of the so-called radial equation, boundary conditions of Robin type can be imposed by using an arbitrary secondary solution. While this choice appears to be moot at the level of the underlying ordinary differential equation, it bears notable consequences at a fully covariant level. Interestingly, we have argued and shown via the concrete example of the two-dimensional half-Minkowski spacetime that, by exploring such freedom, one can account for a large class of boundary conditions that are structurally quite different from regular Robin boundary conditions, possibly including time-dependent ones.

From a structural viewpoint, the choice of secondary solution does not alter the effectiveness of the methods used until now for the construction of ground on static spacetimes. Nevertheless, it does open the possibility of studying a much larger class of boundary conditions and of investigating the physical consequence of the various different choices. To conclude we emphasize that a rationale similar to the one considered in this paper can be taken also in the investigation of boundary condition of Wentzell type, see {\it e.g.} \cite{Dappiaggi2022dwo,Dappiaggi2018pju}. Yet, a full-fledged analysis of this scenario would require a lengthy discussion that is worth leaving to a future work.

\ack
We are grateful for the discussions with professor J. Pitelli. The work of L.C. is supported by a postdoctoral fellowship of the Department of Physics of the University of Pavia, while that of L.S. by a PhD fellowship of the University of Pavia.

\section*{References}

\bibliographystyle{iopart-num}

\begin{thebibliography}{}

	\bibitem{bar2007wave}
	B{\"a}r, Christian and Ginoux, Nicolas and Pf{\"a}ffle, Frank
	``Wave equations on Lorentzian manifolds and quantization'',
	European Mathematical Society  \textbf{3} (2007)
	[\href{https://arxiv.org/abs/0806.1036}{arXiv:0806.1036 [math.DG]}].

	\bibitem{Ake2018dzz}
	L.~Ak\'e, J.~L.~Flores and M.~S\'anchez,
	``Structure of globally hyperbolic spacetimes with timelike boundary'',
	Revista matem{\'a}tica iberoamericana \textbf{37} (1) 45--94 (2020)
	[\href{https://arxiv.org/abs/1808.04412}{arXiv:1808.04412 [gr-qc]}].

	\bibitem{Ishibashi2003jd}
	A.~Ishibashi and R.~M.~Wald,
	``Dynamics in nonglobally hyperbolic static space-times. 2. General analysis of prescriptions for dynamics'',
	Class. Quant. Grav. \textbf{20} (2003), 3815-3826
	doi:10.1088/0264-9381/20/16/318
	[\href{https://arxiv.org/abs/gr-qc/0305012}{arXiv:gr-qc/0305012 [gr-qc]}].

	\bibitem{grubb1968characterization}
	G.~Grubb,
	``A characterization of the non local boundary value problems associated with an elliptic operator'',
	Annali della Scuola Normale Superiore di Pisa-Classe di Scienze, 22 (3) 425--513 (1968).

	\bibitem{fulling1989aspects}
	S. A.~ Fulling,
  ``Aspects of quantum field theory in curved spacetime'',
	Cambridge university press, (1989).

	\bibitem{Sahlmann:2000fh}
	H.~Sahlmann and R.~Verch,
	``Passivity and microlocal spectrum condition,''
	Commun. Math. Phys. \textbf{214} (2000), 705-731
	[arXiv:math-ph/0002021 [math-ph]].

	\bibitem{Dappiaggi2021wtr}
	C.~Dappiaggi and A.~Marta,
	``Fundamental solutions and Hadamard states for a scalar field with arbitrary boundary conditions on an asymptotically AdS spacetimes'',
	Math. Phys. Anal. Geom. \textbf{24}, no.3, 28 (2021)
	[\href{https://arxiv.org/abs/2101.10290}{arXiv:2101.10290 [math-ph]}].


	\bibitem{zettl}
	A. Zettl,
	``Sturm-liouville theory'',
	American Mathematical Soc. \textbf{121} (2012).

	\bibitem{Moretti}
	V. Moretti,
	``Spectral Theory and Quantum Mechanics'',
	Springer-Verlag (2018), 950p.





	\bibitem{DeSouzaCampos2022wsp}
	L.~De Souza Campos,
	``Probing thermal effects on static spacetimes with Unruh-DeWitt detectors'',
	[\href{https://arxiv.org/abs/2203.09976}{arXiv:2203.09976 [gr-qc]}].




	\bibitem{Higuchi2021fxg}
	A.~Higuchi, L.~Schmieding and D.~S.~Blanco,
	``Scalar field in $\mathrm{AdS}_2$ and representations of $\widetilde{\mathrm{SL}}(2,\mathbb{R})$'',
	[\href{https://arxiv.org/abs/2107.08754}{arXiv:2107.08754 [math-ph]}].

	\bibitem{ishibashi2004dynamics}
	A.~Ishibashi and R.~M.~Wald,
	``Dynamics in non-globally hyperbolic static spacetimes: III. Anti-de Sitter spacetime'',
	Class. Quant. Grav. \textbf{21} (2004), 2981
	[\href{https://arxiv.org/abs/hep-th/0402184}{arXiv:0402184 [hep-th]}].


	\bibitem{Garbarz2017wzv}
	A.~Garbarz, J.~La Madrid and M.~Leston,
	``Scalar field dynamics in a BTZ background with generic boundary conditions'',
	Eur. Phys. J. C \textbf{77}, no.11, 807 (2017)
	[\href{https://arxiv.org/abs/1708.04905}{arXiv:1708.04905 [hep-th]}].

	\bibitem{Barroso2019cwp}
	V.~S.~Barroso and J.~P.~M.~Pitelli,
	``Boundary conditions and vacuum fluctuations in ${\mathrm {AdS}}_4$'',
	Gen. Rel. Grav. \textbf{52}, no.3, 29 (2020)
	[\href{https://arxiv.org/abs/1904.10920}{arXiv:1904.10920 [gr-qc]}].

	\bibitem{Barroso2018pjs}
	V.~S.~Barroso and J.~P.~M.~Pitelli,
	``Vacuum Fluctuations and Boundary Conditions in a Global Monopole'',
	Phys. Rev. D \textbf{98}, no.6, 065009 (2018)
	[\href{https://arxiv.org/abs/1806.01473}{arXiv:1806.01473 [gr-qc]}].



	\bibitem{Dappiaggi2018xvw}
	C.~Dappiaggi, H.~Ferreira and A.~Marta,
	``Ground states of a Klein-Gordon field with Robin boundary conditions in global anti\textendash{}de Sitter spacetime'',
	Phys. Rev. D \textbf{98}, no.2, 025005 (2018)
	[\href{https://arxiv.org/abs/1805.03135}{arXiv:1805.03135 [hep-th]}].


	\bibitem{Bussola2017wki}
	F.~Bussola, C.~Dappiaggi, H.~R.~C.~Ferreira and I.~Khavkine,
	``Ground state for a massive scalar field in the BTZ spacetime with Robin boundary conditions'',
	Phys. Rev. D \textbf{96}, no.10, 105016 (2017)
	[\href{https://arxiv.org/abs/1708.00271}{arXiv:1708.00271 [gr-qc]}].

	\bibitem{Dappiaggi2016fwc}
	C.~Dappiaggi and H.~R.~C.~Ferreira,
	``Hadamard states for a scalar field in anti\textendash{}de Sitter spacetime with arbitrary boundary conditions'',
	Phys. Rev. D \textbf{94}, no.12, 125016 (2016)
	[\href{https://arxiv.org/abs/1610.01049}{arXiv:1610.01049 [gr-qc]}].





\bibitem{Dappiaggi2022dwo}
C.~Dappiaggi, B.~A.~Ju\'arez-Aubry and A.~Marta,
``Ground State for the Klein-Gordon field in anti-de Sitter spacetime with dynamical Wentzell boundary conditions'',
Phys. Rev. D \textbf{105} (2022) no.10, 105017
[\href{https://arxiv.org/abs/2203.04811}{arXiv:2203.04811 [hep-th]}].





\bibitem{Campos2021wyr}
L.~d.~Campos, C.~Dappiaggi and D.~Sina,
``Role of boundary conditions on Lifshitz spacetimes'',
Phys. Rev. D \textbf{104}, no.10, 105008 (2021)
[\href{https://arxiv.org/abs/2103.15391}{arXiv:2103.15391 [gr-qc]}].

\bibitem{deSouzaCampos2020bnj}
L.~de S. Campos and C.~Dappiaggi,
``Ground and thermal states for the Klein-Gordon field on a massless hyperbolic black hole with applications to the anti-Hawking effect'',
Phys. Rev. D \textbf{103}, no.2, 025021 (2021)
[\href{https://arxiv.org/abs/2011.03812}{arXiv:2011.03812 [hep-th]}].

	\bibitem{Morley2020zcd}
	T.~Morley, P.~Taylor and E.~Winstanley,
	``Vacuum polarization on topological black holes with Robin boundary conditions'',
	Phys. Rev. D \textbf{103}, no.4, 045007 (2021)
	[\href{https://arxiv.org/abs/2010.01562}{arXiv:2010.01562 [hep-th]}].

	\bibitem{Morley2020ayr}
	T.~Morley, P.~Taylor and E.~Winstanley,
	``Quantum field theory on global anti-de Sitter space-time with Robin boundary conditions'',
	Class. Quant. Grav. \textbf{38}, no.3, 035009 (2021)
	[\href{https://arxiv.org/abs/2004.02704}{arXiv:2004.02704 [gr-qc]}].




	\bibitem{greenBook}
	I.~Stakgold and M.J.~Holst,
	``Green's functions and boundary value problems'',
	John Wiley \& Sons, 99 (2011).


	\bibitem{gitgit}
	Mathematica notebook,
	[\url{https://github.com/lissadesouzacampos?tab=repositories}].

	\bibitem{Louko2007mu}
	J.~Louko and A.~Satz,
	``Transition rate of the Unruh-DeWitt detector in curved spacetime'',
	Class. Quant. Grav. \textbf{25}, 055012 (2008)
	[\href{https://arxiv.org/abs/0710.5671}{arXiv:0710.5671 [gr-qc]}].


	\bibitem{Dappiaggi2018pju}
	C.~Dappiaggi, H.~R.~C.~Ferreira and B.~A.~Ju\'arez-Aubry,
	``Mode solutions for a Klein-Gordon field in anti\textendash{}de Sitter spacetime with dynamical boundary conditions of Wentzell type,''
	Phys. Rev. D \textbf{97} (2018) no.8, 085022
	[arXiv:1802.00283 [hep-th]].



	%
	%














\end{thebibliography}

\end{document}